\documentclass[final,11pt]{amsart}
\usepackage[T1]{fontenc}
\usepackage{a4wide,mathtools,bbm,amsthm,amsmath,amsfonts}
\usepackage{amssymb,color,epic,multirow,comment}
\usepackage{bm}
\usepackage[noend]{algpseudocode}
\usepackage[ruled]{algorithm2e}
\usepackage{graphicx}
\usepackage{subcaption}
\usepackage[export]{adjustbox}
\usepackage{tikz,pgfplots}
\usepackage{orcidlink}
\usepackage{float}
\usepackage{booktabs}
\usepackage{tikz-cd, comment}
\usepgfplotslibrary{groupplots}
\pgfplotsset{compat=newest}
\usetikzlibrary{arrows,decorations,backgrounds,positioning,arrows.meta,
decorations.markings}
\usepgfplotslibrary{colorbrewer}

\mathtoolsset{showonlyrefs}
\theoremstyle{plain}

\newtheorem{theorem}{Theorem}[section]
\newtheorem{lemma}[theorem]{Lemma}

\newtheorem{definition}[theorem]{Definition}

\def\Letters{A,B,C,D,E,F,G,H,I,J,K,L,M,N,O,P,Q,R,S,T,U,V,W,X,Y,Z}
\makeatletter
\@for \@l:=\Letters \do{%
  \expandafter\edef\csname\@l bb\endcsname{%
  \noexpand\ensuremath{\noexpand\mathbb{\@l}}}%
  \expandafter\edef\csname\@l bf\endcsname{{\noexpand\bf \@l}}%
  \expandafter\edef\csname\@l cal\endcsname{%
  \noexpand\ensuremath{\noexpand\mathcal{\@l}}}%
  \expandafter\edef\csname\@l eu\endcsname{%
  \noexpand\ensuremath{\noexpand\EuScript{\@l}}}%
  \expandafter\edef\csname\@l frak\endcsname{%
  \noexpand\ensuremath{\noexpand\mathfrak{\@l}}}%
  \expandafter\edef\csname\@l rm\endcsname{{\noexpand\rm \@l}}%
  \expandafter\edef\csname\@l scr\endcsname{%
  \noexpand\ensuremath{\noexpand\mathscr{\@l}}}%
}
\newcommand{\bs}[1]{{\boldsymbol#1}}

\newcommand{\isdef}{\mathrel{\mathrel{\mathop:}=}}

\newcommand{\bbN}{{\mathbb{N}}}
\newcommand{\bbP}{{\mathbb{P}}}
\newcommand{\bbR}{{\mathbb{R}}}

\newcommand{\calC}{{\mathcal{C}}}
\newcommand{\calE}{{\mathcal{E}}}
\newcommand{\calL}{{\mathcal{L}}}
\newcommand{\calT}{{\mathcal{T}}}

\definecolor{navy}{RGB}{102,153,255}
\definecolor{tuerkis}{RGB}{51,153,204}
\algrenewcommand\alglinenumber[1]{\ding{\numexpr191 + #1}}

\newcommand{\node}[1]{\mathrm{#1}}
\title[
$hp\,$-adaptive trees for graph signal
approximation]{$\bs h\bs p\textnormal{\,}$-adaptive trees for graph signal
approximation}
\author{Giacomo Elefante${\hphantom{}}^1$ \orcidlink{0000-0001-5576-6802}, 
Wolfgang Erb${\hphantom{}}^2$ \orcidlink{0000-0003-3541-5401}, 
Michael Multerer${\hphantom{}}^1$ \orcidlink{0000-0003-0170-0239}}
\address{
${\hphantom{}}^1$ IDSIA USI-SUPSI, Universit{\`a} della Svizzera italiana,
Lugano, Switzerland }
\address{
${\hphantom{}}^2$ Department of Mathematics “Tullio Levi-Civita” and Padova
Neuroscience Center (PNC), Universit{\`a} di Padova, Padova, Italy}
\email{giacomo.elefante@usi.ch, wolfgang.erb@unipd.it, michael.multerer@usi.ch}
\begin{document}
\begin{abstract}
Tree-encoded partitionings of graphs are fundamental tools for the decomposition
and approximation of graph signals.
For the efficient approximation of such graph signals, we develop strategies
based on $hp$-refinement by combining domain decomposition with an improved
local approximation using polynomials of higher degree. In this way,
from a given
graph partitioning tree, a more efficient subtree is extracted in which the
cost of the signal approximation is considerably reduced by still
maintaining the same total error. To this end, we interpret the refinement
process as a binary knapsack problem to determine an enhanced partitioning tree.
We further study an a-posteriori strategy which prunes the partitioning 
tree by optimizing the polynomial degrees over the subdomains. To make
polynomial basis systems accessible for general graphs or high-dimensional data,
we propose local embeddings of graphs into low dimensional Euclidean spaces.
We underpin the efficiency of our algorithms with extensive numerical tests
which carefully assess the impact of the applied refinements and optimization
strategies. 
\end{abstract}

\maketitle

\section{Introduction}
\subsection{Motivation}
Modern sensing and imaging technologies produce large amounts of data on
general domains that are naturally modeled as signals on graphs
\cite{O2022,OFKMV18,SNFOV13,SDS2019}. Examples include seismic observations,
point clouds acquired by Lidars or 3D scanners, as well as transportation and
power networks. Efficient storage, processing, and analysis of these
data require approximation algorithms that adapt to the geometry of
the graph as well as to the local regularity of the signal values.

Efficient algorithms for the approximation of these signals use
multiscale approaches and wavelet-type constructions based 
on a hierarchy of nested partitions of the graph domain 
\cite{Erb23,GNC10,HM22,REC11}. These approaches rely
on the refinement of the partitions to improve resolution and, consequently,
approximation. However,
approximation theory for finite elements has shown that simultaneously adapting
the partition size ($h$-adaptivity) and the local polynomial degree
($p$-adaptivity) often yields significantly sparser representations,
particularly for functions with spatially varying regularity. Generalizing this
$hp$-adaptive approach to graph signal processing is therefore a natural step
toward a more efficient compression and approximation of graph signals.

\subsection{Literature review}
Partition-based approaches recursively decompose the graph domain into smaller
subdomains, so that a partitioning tree is constructed that contains the
subdomains of the partitions as nodes. Graph signals are then approximated by a
function from a simple local approximation space, e.g., piecewise constants,
on each element of the
partition. This hierarchical
decomposition induces a natural multiresolution analysis that forms the basis
for spatially localized wavelets and related basis systems on point clouds and
graphs \cite{GNC10,HM22,IS2014,NO2012,REC11}. The main advantage of this
approach compared to other wavelet constructions such as diffusion wavelets
\cite{CM06} and spectral graph wavelets \cite{HVG11} is its computational
efficiency and the clear spatial localization of the basis elements in terms of
the given partitions.

Adaptive partition-based methods also incorporate information from the signal.
Here, the partitioning tree is refined according to the local properties of the
signal, generating finer partitions only where the local error is large.
This strategy resembles binary space partitioning methods developed for image
processing \cite{RLVN1991}, or the wedgelet construction introduced in
\cite{Donoho99,FDFW07}, where adaptive geometric partitions yield sparser
descriptions of signals. These adaptive strategies have been adapted to graph
signals in \cite{Erb23} in terms of so-called graph wedgelets.

All these constructions are pure $h$-refinement methods, which decrease the
partition size $h$ to increase the resolution of the approximation.
This restriction to domain decompositions leads typically to algebraic
convergence rates governed by the smoothness class of the signal
\cite{BinevDeVore04,DeVore98,Erb23}. In many practical scenarios, signals
however display a rather heterogeneous behavior, incorporating regions with
large smoothness and other components with oscillatory or singular behavior.
In finite element methods, this limitation is overcome by mixing $h$-refinement
with a so-called $p$-refinement, where the approximation is improved by
adaptively increasing the degree of the local polynomial approximant. The
resulting $hp$-refinement is classical and extensively studied in the finite
elements literature, see, e.g.,
\cite{BabuvskaSuri94,MitchellMcClain14,book:Schwab98,TS03}.
While mesh refinement is robust near singularities, high-order approximation can
deliver up to exponential convergence in regions with a smooth signal
\cite{ApelMelenk17}.

Since the refinement tree of an adaptive mesh encodes
all subdivisions of the domain, 
tree-based approximation frameworks for $hp$-adaptivity and near-best, cost-aware
refinement can naturally be accommodated \cite{Binev18,BDL11}.
Profit-to-cost tree selections, formulated as binary knapsack problems
\cite{book:MartelloToth90}, have been effectively applied for quasi-optimal
polynomial and sparse-grid approximation of partial differential equations with
random data \cite{BTNT12,NTT16}. Further, bottom-up error-cost pruning methods are
well-established across various fields. Notable examples are the classical
minimal cost-complexity pruning decision trees \cite[Section 3.3]{book:Breiman},
as well as the threshold-based pruning of affine Walsh transforms, as for
instance described in \cite{Florez23}. Finally, local low-dimensional
embeddings of graphs can be obtained via classical nonlinear dimension reduction
techniques such as multidimensional scaling \cite{book:coxcox, book:LeeVerleysen} and
\texttt{ISOMAP} \cite{ST02,TdSL00}, as well as with neighbor embedding
techniques such as \texttt{t-SNE} \cite{vdMH08} and \texttt{UMAP} \cite{MHM20}.

\subsection{Contributions}
Starting from the $hp$-refinement idea, we develop $hp$-adaptive tree
constructions for the approximation of graph signals, using graph wedgelets as
the primary tool to generate an initial partitioning tree. This tree is
then enhanced by two complementary strategies incorporating a $p$-adaptive
component based on polynomial refinements of higher degree. Since 
polynomial basis functions are not directly available on general graphs, we will 
use local embeddings in the Euclidean space to generate the local polynomial 
approximation spaces for the $p$-refinement.
Our contributions are the following.
\begin{itemize}
\item We establish that greedy $h$-refinement of the leaf
with the maximal local error is near-best in the class of $h$-adaptive trees
of comparable size under suitable assumptions on the local error, resembling a
well-known result from \cite{Binev18}.
\item We propose a greedy subtree selection based on a profit-cost trade-off,
which we formulate as a binary knapsack problem.
\item We propose an a-posteriori error-cost pruning (ECP) step, in which the
already refined tree is pruned bottom up and an $h$-approximation is
replaced by a $p$-approximation whenever this proves to be more efficient.
\item We show that the pruning phase in the ECP approach increases neither the
error nor the storage cost for the coefficients in an expanded tree. 
\item We consider patch-wise local embeddings to construct
the polynomial spaces on general graphs.
If low-dimensional local embeddings exist, this strategy
mitigates the curse of dimensionality and is therefore
particularly attractive for high-dimensional signals.
\end{itemize}
\subsection{Outline}
The remainder of this article is structured as follows. In Section
\ref{sec:wedgelets}, we introduce binary graph partitioning (BGP) trees to
encode hierarchical partitioning structures on a graph and recapitulate graph
wedgelets, whose adaptive BGP trees are used to validate the introduced
$hp$-strategies. In Section \ref{sec:adapt_split}, we show that greedy
$h$-refinement of the leaf with the largest error produces a near-optimal
$h$-adaptive tree. In Section \ref{sec:hp_strategies}, we introduce the greedy
knapsack criterion and the error-cost pruning method. In Section
\ref{sec:gen_graph}, we extend the construction to general graphs and
high-dimensional signals through a patch-wise embedding and a forest of graph
partitioning trees. In Section \ref{sec:num_test}, we validate all introduced
methodologies with extensive numerical tests on real-world and synthetic
datasets and compare them with a further $hp$-tree construction available in
the literature. Finally, we conclude the work in Section \ref{sec:conclude}.

\section{Binary Graph Partitioning Trees and Wedgelet Approximation}
\label{sec:wedgelets}

We encode hierarchical partitions of a vertex set utilized in
partition-based approximation methods by means of binary graph partitioning
(BGP) trees. This framework can naturally be extended to non-binary trees, but
for simplicity we restrict our attention to the binary case. Throughout this
work, we assume that $G=(V,E)$ is a simple undirected graph with vertex set
$V$ of cardinality $n$, edge set $E$, and graph distance $\mathrm{d}$, which
measures the distance between pairs of vertices.  

\begin{definition} \label{def:BGP} 
A binary graph partitioning (BGP) tree $\mathcal{T}$ of a graph $G$ with $L$
leaves is a binary tree consisting of subsets of the vertex set $V$ that are
ordered recursively in partitions $\mathcal{P}^{(m)}$, $m \leq L$, of $V$ by
the following rules:
\begin{enumerate}
\item The root $\Delta_1^{(1)}$ of the BGP tree $\mathcal{T}$ consists of the
  vertex set $V$. The respective first partition of $V$ is given as 
$$\mathcal{P}^{(1)} \isdef \{ \Delta_1^{(1)}\}.$$
\item For a partition $\mathcal{P}^{(m)} 
  = \{\Delta_1^{(m)}, \ldots, \Delta_m^{(m)}\}$ of $V$ consisting of $m < L$
elements in the BGP tree $\mathcal{T}$, the next partition $\mathcal{P}^{(m+1)}$
of $V$ in $\mathcal{T}$ is obtained by a dyadic split of a chosen set
$$\Delta_j^{(m)} = \Delta_{j}^{(m+1)} \cup \Delta_{m+1}^{(m+1)}, 
\quad j \in \{1, \ldots, m\},$$ into two disjoint non-empty sets 
$\Delta_{j}^{(m+1)}$, $\Delta_{m+1}^{(m+1)}$. The next partition 
$\mathcal{P}^{(m+1)}$ is given as  
\[
\mathcal{P}^{(m+1)} \isdef \{\Delta_1^{(m+1)}, \ldots, \Delta_{m+1}^{(m+1)}\},
\]
with the $m-1$ remaining sets given as $\Delta_{k}^{(m+1)} = \Delta_{k}^{(m)}$ 
for $k \in \{1, \ldots, m\} \setminus \{j\}$.
\end{enumerate}  
\end{definition}

If $\Delta_0$ and $\Delta_1$ are elements of $\mathcal{T}$ obtained from a
dyadic split of an element $\Delta \in \mathcal{T}$, then $\Delta_0$ and
$\Delta_1$ are referred to as children of $\Delta$ in the tree $\mathcal{T}$. A
BGP tree $\mathcal{T}$ is by construction a full binary tree in the sense that a
set $\Delta \in \mathcal{T}$ can only have two children or no children at all.
In the latter case, we call $\Delta$ a leaf of the tree $\mathcal{T}$. The set
$\mathcal{L}(\mathcal{T})$ of all leaves of the tree $\mathcal{T}$ corresponds
to the final partition $\mathcal{P}^{(L)}$ in the splitting process and contains
$L$ elements.

\begin{definition}
A BGP tree $\mathcal{T}$ is called balanced if there exists a constant
$\frac12 \leq \rho < 1$ such that for every child $\Delta_0$ of
$\Delta \in \mathcal{T}$ we have
\[(1-\rho) |\Delta| \leq |\Delta_0| \leq \rho |\Delta|.\]
We call a BGP tree $\mathcal{T}$ complete if $\mathcal{L}(\mathcal{T})$ contains
exactly all $n$ vertices of $G$ as leaves. 
\end{definition}

In the graph wedgelet construction introduced in \cite{Erb23}, the BGP tree is
generated adaptively based on the information of the underlying graph signal.
At each refinement step $m$, the partition element $\Delta_j^{(m)}$ is selected
having the smallest local ${L}^2$-error for the signal relative to its
mean value. The chosen set $\Delta_j^{(m)}$ is then subdivided by a wedge split
chosen to minimize the sum of the squared approximation errors between the
signal and its mean values on the two resulting wedges. These wedge splits and
the corresponding partitions can be encoded compactly in terms of a sequence of
center nodes of the graph.

Once the adaptive BGP tree is generated by a hierarchic concatenation of wedge
splits, the wedgelet approximations of the signal consist of a piecewise
constant functions on the partitions $\mathcal{P}^{(m)}$ of the BGP tree.
Furthermore, the detail information between two approximations on consecutive
partitions $\mathcal{P}^{(m)}$ and $\mathcal{P}^{(m-1)}$ can be represented
compactly in terms of geometric wavelets. For the best $m$-term approximation
$S_m(f)$ of a signal $f$ in terms of $m$ geometric wavelet, it is shown in
\cite{Erb23} that the ${L}^2$-error decays algebraically as
\begin{equation} \label{eq:mtermalgebraicrate}
\left\| f - S_m(f) \right\|_{{L}^2} = \mathcal{O}(m^{-\alpha})   
\end{equation} 
for some $\alpha > 0$ depending on the Besov-type smoothness class of the signal
$f$. Such algebraic decay rates are well-known for adaptive $m$-term
approximation and can be found in classical contexts of non-linear approximation
\cite{DeVore98}, but also in continuous scenarios related to piecewise polynomial
and wavelet approximation based on hierarchical partitions \cite{DL2005,KP2003}.

\section{Greedy $h$-refinements of partitioning trees} \label{sec:adapt_split}
In the subsequent sections, we assume that $\mathcal{T}$ is a fixed BGP tree with $L$ leaves as given in Definition \ref{def:BGP}, obtained by recursively splitting a vertex set $V$ in a nested sequence of partitions. As a guiding example, one may consider a BGP tree constructed adaptively by wedge splits, as described in the previous section. The $h$- and $hp$-refinement strategies considered in this and next section will lead to different subtrees extracted from the initial tree $\mathcal{T}$. 

\subsection{Near-best $h$-approximation}
In the greedy construction of the BGP wedgelet tree $\mathcal{T}$, the largest local approximation error is used to select the partition element $\Delta_j^{(m)}$ to be split in the refinement step $m$. This selection rule provides already a first intrinsic greedy $h$-refinement procedure for the given BGP tree $\mathcal{T}$.

In the following, we analyse the approximation properties of this particular $h$-refinement procedure. 
In particular, we show that this greedy refinement strategy automatically produces subtrees 
that are near-optimal among all trees with comparable number of leaves under two assumptions on the local approximation errors. 

We associate to
each node $\Delta \in \mathcal{T}$ a local error $e(\Delta)$ and measure the global
approximation error linked to the finite tree $\calT$ by summing the local contributions among
its leaves in $\calL(\calT)$. In particular, we define the total error as
$$ 
\calE(\calT) = \sum_{\Delta\in\calL(\calT)} e(\Delta).
$$
The first assumption on the error functional is natural and concerns the superadditivity of the local 
error, i.e.,
\begin{equation} \label{eq:errorassumption1}
e(\Delta) \geq e(\Delta_0) + e(\Delta_1),
\end{equation}
where $\Delta_0$ and $\Delta_1$ denote the two children of $\Delta$. 
The second assumption formalizes the requirement of an even distribution of the error among the 
children of a node, this means that there exists a constant $c \in (0, 1/2]$ such that
\begin{equation} \label{eq:errorassumption2}
    e(\Delta_i) \ge c\, e(\Delta) \qquad\text{ for } i\in \{0,1\},
\end{equation}
meaning that the error cannot be completely localized in a single child.

The next two lemmas provide the upper and lower estimates on sums of local
errors that will be used in the proof of the main result below.

\begin{lemma} \label{Lemma1}
Let $t\geq 0$, and let $\Tcal$ be a BGP tree such that 
$e(\Delta)\leq t$ for all leaves $\Delta\in\calL(\Tcal)$. Then,
$$ \sum_{\Delta\in\calL(\calT)} e(\Delta) \leq L \, t 
= \frac{\#\Tcal +1}{2} \, t, $$
where \(\#\Tcal\) denotes the number of nodes in \(\Tcal\) and $L = \#\calL(\Tcal)$ is the number of leaves.
\end{lemma}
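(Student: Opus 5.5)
The inequality itself is immediate: every leaf satisfies $e(\Delta)\le t$, so summing over the $L=\#\calL(\calT)$ leaves gives
\[
\sum_{\Delta\in\calL(\calT)} e(\Delta)\le \sum_{\Delta\in\calL(\calT)} t = L\,t .
\]
No structural assumption is needed for this step. In particular, neither \eqref{eq:errorassumption1} nor \eqref{eq:errorassumption2} is used; only the leafwise bound enters.

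The only remaining content is the node–leaf identity $L=\frac{\#\calT+1}{2}$ for a BGP tree. I plan to prove it by induction along the construction in Definition~\ref{def:BGP}. The tree with partition $\mathcal{P}^{(1)}$ consists of the root only, so it has one node and one leaf, and $1=\frac{1+1}{2}$.

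For the inductive step, passing from $\mathcal{P}^{(m)}$ to $\mathcal{P}^{(m+1)}$ replaces the leaf $\Delta_j^{(m)}$ by its two children.
\begin{itemize}
\item The number of leaves increases by exactly one: $\Delta_j^{(m)}$ stops being a leaf, and two new leaves appear.
\item The number of nodes increases by exactly two: the children $\Delta_j^{(m+1)}$ and $\Delta_{m+1}^{(m+1)}$ are new nodes. They are distinct from all existing nodes as elements of the tree, since the split is into disjoint non-empty sets.
\end{itemize}
Hence the relation $\#\calT=2L-1$ is preserved, and after $L-1$ splits it holds for the final tree.

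Alternatively, one can count edges directly, using that a BGP tree is a full binary tree. Each of the $L-1$ internal nodes has exactly two children, so there are $2(L-1)$ edges. A tree with $\#\calT$ nodes has $\#\calT-1$ edges, so $\#\calT-1=2(L-1)$, which again gives $\#\calT=2L-1$.

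The main obstacle is mainly bookkeeping: nodes must be counted as tree positions rather than as vertex subsets. This only matters if a split could reproduce a set that already occurs elsewhere in the tree. Since the children are non-empty proper subsets of their parent, and the tree is built by successive splits, no such collision arises.
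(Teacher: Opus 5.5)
Your proposal is correct and matches the paper's proof: the bound $L\,t$ comes from summing the leafwise estimate over the $L$ leaves, and $\#\calT = 2L-1$ is obtained by induction on the number of dyadic splits. The paper runs this induction backwards (removing two sibling leaves lowers $L$ by one and $\#\calT$ by two), while you run it forwards along Definition~\ref{def:BGP}, which is the same argument.
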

\begin{proof}
The sum on the left consists of $L = \#\calL(\Tcal)$
terms, each summand being bounded from above by $t$. This yields already the first 
inequality. Further, the number of nodes $\#\Tcal$ in a full binary tree is
given by $\#\Tcal = 2 L - 1$. The latter identity can be proven 
inductively: pruning $2$ sibling leaves from a full binary tree gives a new full binary 
tree with $L - 1$ leaves. 
\end{proof}

\begin{lemma} \label{Lemma2}
Let $a\geq 0$ and let $\widetilde{\Delta}$ be a node in the BGP tree $\calT$ 
and let $T$ be a full binary subtree of $\calT$ rooted at $\widetilde{\Delta}$ such that 
$e(\Delta)\geq a$ and the superadditivity \eqref{eq:errorassumption1} is satisfied for all nodes $\Delta \in T$. Then, we have
\[
e(\widetilde{\Delta})\geq \#\calL(T) a.
\]
\end{lemma}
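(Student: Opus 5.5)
The natural route is induction on the structure of the full binary subtree $T$; equivalently, we can induct on the number of leaves $\#\calL(T)$, or on the number of nodes $\#T = 2\#\calL(T)-1$ as in the proof of Lemma~\ref{Lemma1}. The idea is to iterate the superadditivity \eqref{eq:errorassumption1} from the root $\widetilde{\Delta}$ downward. This gives $e(\widetilde{\Delta}) \geq \sum_{\Delta\in\calL(T)} e(\Delta)$. Each of the $\#\calL(T)$ summands is at least $a$, and that yields the claim.

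\emph{Base case.} If $T$ consists only of $\widetilde{\Delta}$, then $\#\calL(T)=1$. The hypothesis $e(\widetilde{\Delta})\geq a$ gives the bound directly.

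\emph{Induction step.} Otherwise, since $T$ is a full binary tree, $\widetilde{\Delta}$ has exactly two children $\Delta_0,\Delta_1$ in $T$. Let $T_0$ and $T_1$ be the subtrees of $T$ rooted at $\Delta_0$ and $\Delta_1$. Both are again full binary subtrees of $\calT$, both have fewer nodes than $T$, and every node of each still satisfies $e(\Delta)\geq a$ and \eqref{eq:errorassumption1}. The leaves of $T$ split disjointly as $\calL(T)=\calL(T_0)\,\dot\cup\,\calL(T_1)$. By the induction hypothesis, $e(\Delta_i)\geq \#\calL(T_i)\,a$ for $i\in\{0,1\}$. Superadditivity at $\widetilde{\Delta}$ then gives
\[
e(\widetilde{\Delta}) \geq e(\Delta_0)+e(\Delta_1) \geq \bigl(\#\calL(T_0)+\#\calL(T_1)\bigr)a = \#\calL(T)\,a .
\]

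There is no real obstacle here. The only point needing care is bookkeeping: we should check that the children of $\widetilde{\Delta}$ in $T$ are the same as its children in $\calT$, which holds because $T$ is a full binary subtree, so superadditivity applies to them. We should also check that the hypotheses are inherited by the subtrees $T_0$ and $T_1$. Notably, the lower bound $e(\Delta)\geq a$ is only needed at the leaves of $T$.
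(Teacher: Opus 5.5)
Your proof is correct and follows the paper's argument: the paper applies superadditivity recursively from $\widetilde{\Delta}$ down to the leaves of $T$ to get $e(\widetilde{\Delta}) \geq \sum_{\Delta\in\calL(T)} e(\Delta) \geq \#\calL(T)\,a$. Your structural induction is a rigorous write-up of that same recursion, and your remarks that the children in $T$ coincide with those in $\calT$ and that the bound $e(\Delta)\geq a$ is only needed at the leaves are accurate.
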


\begin{proof}
The proof follows directly from the superadditivity \eqref{eq:errorassumption1} of the error functional on the nodes of the subtree $T$. Indeed, 
$e(\widetilde{\Delta}) \geq e(\widetilde{\Delta}_0) + e(\widetilde{\Delta}_1),$
with $\widetilde{\Delta}_i$ being the children of $\widetilde{\Delta}$.
Applying the superadditivity recursively until only the leaves of the 
subtree \(T\) are left, we get
$$ e(\widetilde{\Delta})\geq \sum_{\Delta\in\calL(T)} e(\Delta) 
\geq \#\calL(T) a.
$$
\end{proof}

We can now show that, under the assumptions \eqref{eq:errorassumption1} and \eqref{eq:errorassumption2} on the error, the $h$-refinement strategy based on the greedy selection of the node with the largest error is near-best in the class of $h$-refined trees. This result resembles a corresponding statement proven in \cite{Binev18} which used a modified error functional instead of the maximal error.

\begin{theorem} \label{thm:h-nearbest}
Let \(\Tcal\) be a BGP tree, such that the local errors
$e(\Delta)$, \(\Delta\in\Tcal\),
satisfy the assumptions \eqref{eq:errorassumption1} and \eqref{eq:errorassumption2} for all 
internal nodes of \(\Tcal\).
Further, let $\calT_m$ with \(\#\Lcal(\Tcal_m)=m\) be the BGP subtrees of $\mathcal{T}$ recursively obtained by subdividing the leaf of $\calT_{m-1}$ with the maximum local error
$e(\Delta)$. Then, the tree $\calT_m$ provides a near-best $h$-adaptive
approximation error in the sense that
$$
\calE(\calT_m) \leq \frac{m}{c\,\max\{m-k+1,2\}} \sigma_k^h,$$
with 
$$\sigma_k^h \isdef \min_{T : \#\calL(T)\leq k} 
\sum_{\Delta\in\calL(T)} e(\Delta),$$
for any integer $k\leq m$, where $T$ denotes any BGP subtree of $\mathcal{T}$ with at most $k$ leaves. 
\end{theorem}

\begin{proof}
We define the sequence $$t_m \isdef \max_{\Delta\in\calL(\calT_m)} e(\Delta),$$
which, due to property \eqref{eq:errorassumption1} and the construction of the subtrees
$\calT_m$, is non-increasing with respect to $m$. Next, let $\calT_k^*$ be a BGP subtree of $\mathcal{T}$ with $k$ leaves for which best approximation is achieved, i.e.,
$\Ecal(\calT_k^*)=\sigma_k^h$. 

Comparing the two BGP subtrees $\calT_m$ and $\calT_k^*$ of $\calT$, a first possible scenario is that 
$\calT_k^* \subset \calT_m$. Due to the superadditivity \eqref{eq:errorassumption1} and the applied greedy
refinement, we get in this case that $\calE(\calT_m) \leq \calE(\calT_k^*)$. 
Conversely, if $\calT_m \subseteq \calT_k^*$ then $m \leq k$. Since by assumption $k \leq m$, this implies that $k = m$ and $\calT_m = \calT_k^*$, and hence $\calE(\calT_m) = \sigma_k^h$. This means that also in this second scenario the stated bound holds trivially. 

We may therefore assume that each of the two trees has nodes 
that do not belong to the other. Since a BGP tree with $m$ leaves has $2 m - 1$ nodes, the trees $\calT_m$ and $\calT_k^*$ have $2 m-1$ and $2 k-1$ nodes, respectively. Therefore, as $m \geq k$, there are at least $(2 m-1)-(2 k-1)=2(m-k)$ nodes of $\calT_m$ that are not nodes of $\calT_k^*$. The fact that $\calT_m$ and $\calT_k^*$ are full binary trees with the same root further implies that also in the case $k = m$ the tree $\calT_m$ contains at least $2$ nodes that are not in $\calT_k^*$. We denote the set of all such nodes by $B$. By our assumptions, $B$ is non-empty and contains at least $2 \max\{1, m - k\}$ nodes.
    
Since the subtree $\calT_m$ is built by greedy refinement,
every internal node of $\calT_m$ was chosen for refinement at a certain point, precisely when its local error was maximal. As the maximum leaf error $t_m$ is 
non-increasing, this implies that for every internal node $\Delta\in\calT_m$ the error satisfies $e(\Delta)\geq t_m$. Because of the additional assumption \eqref{eq:errorassumption2} on the distribution of the error among the children of a tree node, we therefore get 
$$e(\Delta)\geq c \, t_m \quad \text{ for all $\Delta \in \mathcal{T}_m$}.$$ 

Now, for any leaf $\Delta\in\calL(\calT_k^*)$, let us denote by $B_\Delta$ the nodes of $\calT_m$ that are descendant of $\Delta$, including $\Delta$
itself, and set $b_\Delta \isdef \#B_\Delta$. Clearly, if $\Delta$ is not contained in
$\calT_m$ then $B_\Delta$ is empty. Further, we have $b_\Delta\geq 1$ whenever
$\Delta$ is a node of $\calT_m$. Since every node of $B$ is a
descendant of a unique leaf of $\calT_k^*$, the family
$\{B_\Delta\}_{\Delta\in\calL(\calT_k^*)}$ is a partition of $B \cup (\calL(\calT_k^*) \cap \calT_m)$. 
If $b_\Delta\geq 1$, we can consider $B_{\Delta}$ as a full binary subtree of $\mathcal{T}_m$ rooted in $\Delta$ 
with $(b_\Delta+1)/2$ leaves.

Now, applying Lemma \ref{Lemma2} to the subtree $B_\Delta$, we obtain 
$$ 
e(\Delta) \geq \# \calL(B_{\Delta}) c\, t_m = \frac{b_\Delta+1}{2} c \, t_m.
$$
for all leaves $\Delta \in \calL(\calT_k^*) \cap \calT_m$. Thus,
\begin{align*}
    \sigma_k^h =& \sum_{\Delta\in\calL(\calT_k^*)} 
e(\Delta)\geq \sum_{\Delta \in \calL(\calT_k^*) \cap \calT_m} 
e(\Delta) \\ \geq& \sum_{\Delta\in \calL(\calT_k^*) \cap \calT_m} \frac{b_\Delta+1}{2} c \, t_m.
\end{align*}
Since
\begin{align*}
    \sum_{\Delta\in \calL(\calT_k^*) \cap \calT_m } (b_\Delta  +1)
    =& \#B + 2\# (\calL(\calT_k^*)\cap \calT_m) \\ \geq& \#B + 2 
    \geq  2 \max \{m - k , 1\} + 2,
\end{align*}
we therefore get
    $$\sigma_k^h \geq c\,\max\{m-k+1,2\} \, t_m.$$
Finally, we obtain for the total error $\calE(\calT_m)$ the estimate   
\begin{align*}
        \calE(\calT_m) &= \sum_{\Delta\in\calL(\calT_m)} e(\Delta) 
        \leq m \, t_m \leq \frac{m}{c\, \max\{m-k+1,2\}} \sigma_k^h.
    \end{align*}
\end{proof}

The result shows that greedy $h$-refinement with a fixed error quantity for each element of the partitioning tree can be near-best up to an explicit
multiplicative constant. However, the assumption \eqref{eq:errorassumption2} on the even distribution of the errors is not intrinsic for partition-based approximation methods, but rather a hypothesis
which excludes unbalanced error distributions after a split. The signal adaptivity of partitioning methods as the graph wedgelets leads to smaller combined errors on the children of a tree node, but it does not guarantee any balancedness between the errors. Thus, we can in general also not guarantee theoretically that the greedy $h$-refinement strategy of the wedgelets leads to a near-best $h$-refinement. On the other hand, the numerical tests in Section \ref{sec:num_test} suggest that the greedy strategy of the wedgelets does in general not perform that bad in practice. 

Nevertheless, we aim to find improved refinement strategies that include also adaptive polynomial refinements for the approximation errors. Indeed, as already suggested by the Jackson-type estimate in \eqref{eq:mtermalgebraicrate}, pure $h$-refinement is still limited to algebraic convergence
rates. This motivates the construction of $hp$-refinements introduced in the next section that are able to combine domain refinements with adaptively selected polynomial degrees for the local approximation.

\section{$hp\,$-adaptivity strategies} \label{sec:hp_strategies}
The convergence rate of piecewise polynomial approximation strongly depends on the regularity of the underlying signal. Pure $h$-refinement methods, which improve the approximation only through a finer partition of the domain,
are limited to algebraic convergence rates and cannot take full advantage of
additional smoothness. To achieve higher convergence rates, one can additionally exploit the smoothness of the signal by locally increasing the polynomial degree of the approximation. This leads to so-called $p$-adaptivity and, more generally, to $hp$-adaptive strategies, which combine local mesh refinement with variable polynomial degrees. It is 
well-established in approximation theory and finite element analysis that,
while $h$-refinements typically yield algebraic convergence, $p$-refinement can achieve exponential convergence rates for locally analytic functions, see, e.g., \cite{ApelMelenk17,BabuvskaSuri94,book:Schwab98}.

To allow $p$-adaptivity for graph domains, we assume that the underlying graph can be locally embedded in a domain $\Omega\subset\bbR^s$ in which polynomial basis systems are available. Once such an embedding is available, as detailed in Section \ref{sec:gen_graph}, it becomes possible to use $hp$-adaptive strategies on the given graph partitioning trees. The underlying idea is that in regions where the graph signal is sufficiently smooth high-order polynomial approximation can be used to obtain significantly higher approximation rates, and potentially even exponential convergence. 

Denoting by $\bbP_d(\bbR^s)$ the space of $s$-variate polynomials of total
degree at most $d$, i.e.,
\[ \bbP_d(\bbR^s)
= \operatorname{span}\bigl\{ \bs{x}^{\bs{\alpha}} \; : \;
\bs{\alpha}\in\bbN^s,\ \|\bs{\alpha}\|_1 \leq d \bigr\},\]
its dimension is
\[ \dim \bbP_d(\bbR^s) = \binom{d+s}{s} = \Ocal(d^s). \]
This implies that for embeddings in high-dimensional spaces $\bbR^s$ the cost
for the storage of the polynomial coefficients increases rapidly with the degree $d$. This cost has to be taken into account when comparing the efficiency of $p$-refinement against different levels of $h$-refinements. To measure this quantity, we define the complexity or cost $\calC (\calT,d)$ of an $hp$-approximation based on the tree $\calT$ decorated with the degrees $d$ as
\begin{equation}
    \calC (\calT,d) = \sum_{\Delta\in \calL(\calT)} c_{d(\Delta)}(\Delta),
\end{equation}
where $c_{d(\Delta)}(\Delta)$ denotes the number of coefficients used to represent 
the polynomial on a leaf $\Delta$ of the tree $\calT$. The complexity \(\calC (\calT,d)\) itself describes the total count of the coefficients used for the piecewise-polynomial approximation with respect to the leaf partition of the tree \(\Tcal\). The polynomial approximant of degree $d(\Delta)$ is concretely obtained by a least-squares fit of the signal values on the subdomains in $\calL(\calT)$. Herein, the fit by a constant mean value amounts to classical piecewise-constant signal approximation based on the partitions of a graph. This is related to a pure $h$-refinement for signal approximation and, for instance, used in the already mentioned work \cite{Erb23} on wedgelet approximation. 

For $hp$-refinement, polynomials with different degrees are employed to approximate the signal on the leaves of $\Tcal$. For this, we have to adapt the definition of the total error as
$$ 
\calE(\calT,d) \isdef \sum_{\Delta\in\calL(\calT)} e_{d(\Delta)}(\Delta).
$$
This generalizes the definition from the previous 
section in which the same local approximation method was applied to each partition element of the tree. In the following, we introduce now two strategies for $hp$-adaptivity on graphs.

\subsection{Greedy knapsack $hp$-adaptivity}
As a first strategy to steer the refinement process, we formulate $hp$-adaptivity as a binary knapsack problem, as for instance introduced in \cite{book:MartelloToth90}. The classical knapsack problem aims at maximizing a benefit under a prescribed capacity or cost budget. In our case, the benefits are reductions of the approximation error, while the costs account for the degrees of freedom introduced by the $h$- and $p$-refinements. 

Let $\calT$ be a fixed BGP tree rooted at the full vertex set $V$. We interpret $\calT$ as a reference tree. Inside this ambient tree all admissible adaptive trees $\calT'$ are subtrees $\calT' \subset \calT$, and, whenever a leaf of $\calT'$ has to be split, the children generated by the $h$-refinement are prescribed by $\calT$. Therefore, the geometry of the admissible splits is a priori fixed by $\calT$ and independent of the refinement criterion. Moreover, since $\calT$ is finite, it also limits the set of admissible refinements and provides an additional stopping criterion. One natural possibility is to choose $\calT$ as the complete BGP tree obtained by recursively applying a particular splitting rule until every leaf consists of a single vertex.

For each node $\Delta\in\calT$, we introduce binary decision
variables 
$$ x_\Delta^h, x_{\Delta,d}^p \in \{0,1\}.$$ 
Here, $x_\Delta^h = 1$ indicates that the node $\Delta$ is $h$-refined into its children $\Delta_0$ and $\Delta_1$, and $x_{\Delta,d}^p = 1$ indicates that there is a $p$-refinement on the node $\Delta$ from degree $d-1$ to degree $d$.

Each refinement is assigned a positive error reduction and a cost increase. 
In the case of $h$-refinement, we split a node $\Delta$ in two children $\Delta_0$, $\Delta_1$, and approximate the signal on each child by a degree-$0$ polynomial, i.e., a constant. Then, the error reduction and cost increase for this $h$-refinement are defined as
$$ E_h(\Delta) = e_0(\Delta) -\big(e_0(\Delta_0) + e_0(\Delta_1)\big), \quad W_h(\Delta)=1+\lambda,$$
where $e_0(\Delta)$ denotes the error of the best constant approximation on $\Delta$. The cost increase $W_h(\Delta)$ for the $h$-refinement is set equal to $1+\lambda$ since only one extra constant has to be stored for the approximation in the new partition and $\lambda$ corresponds to the cost of defining such a new partition. For a $p$-refinement, we set
$$ E_p(\Delta,d) = e_{d-1}(\Delta) - e_{d}(\Delta), \quad W_p(\Delta,d) = c_d - c_{d-1}, 
$$
where $e_{d}(\Delta)$ denotes the least-squares error of the best polynomial approximation of degree $d$, and $W_p(\Delta,d)$ the increase of polynomial coefficients when passing from degree $d-1$ to $d$. 

Given a maximal polynomial degree $d_{\max} \geq 1$ and a cost budget $C$, our objective is to find a configuration of the decision variables on the tree $\Tcal$ that minimizes the total approximation error without exceeding the prescribed degree $d_{\max}$ and the cost budget $C$. This is equivalent to maximize the following cumulative error reduction: 
\begin{equation}
\begin{aligned}
\max_{x_\Delta^h,x_{\Delta,d}^p} \quad & 
\sum_{\Delta\in\calT} \bigg(E_h(\Delta)x_\Delta^h + 
\sum_{d=1}^{d_{\max}} E_p(\Delta,d) x_{\Delta,d}^p \bigg) \\
\textrm{s.t.} \quad & \sum_{\Delta\in\calT} 
\bigg(W_h(\Delta)x_\Delta^h + \sum_{d=1}^{d_{\max}} W_p(\Delta,d) 
x_{\Delta,d}^p \bigg) \leq C, \\
                    & x_\Delta^h,x_{\Delta,d}^p\in\{0,1\}
                    \quad\text{for all } \Delta\in\calT.    \\
\end{aligned}
\end{equation}

Additional admissibility constraints ensure that the selected variables form a valid configuration for an $hp$-adaptive tree. 
First, a child $\Delta_i$ can only be $h$-refined if its parent $\Delta$ has already been refined, i.e., 
\[ x_{\Delta_i}^h \leq x_\Delta^h, \qquad i \in \{0,1\}.\]
Since $p$-refinement is restricted to the leaves of the adaptive tree, we also require
$$ x_{\Delta_i,d}^p \leq x_{\Delta}^h - x_{\Delta_i}^h, \quad i \in \{0,1\}.$$
Moreover, polynomial degrees can only be incorporated sequentially in the $p$-refinement process, so that
$$ x_{\Delta_i,d}^p \leq x_{\Delta_i,d-1}^p,\quad i \in \{0,1\}.$$

Rather than solving this binary optimization problem exactly, we employ an efficiency-based greedy strategy motivated by a linear relaxation of the knapsack problem and Dantzig's algorithm \cite{book:MartelloToth90}. While Dantzig's algorithm gives an exact optimal solution for the fractional knapsack problem, it provides only a heuristic for the binary problem with the tree constraints described above. 

For our greedy knapsack algorithm, we associate with each admissible refinement an efficiency measure defined as the following error reduction per unit cost:
$$ r_\Delta^h \isdef \frac{E_h(\Delta)}{W_h(\Delta)},
\quad r_\Delta^p(d) \isdef \frac{E_p(\Delta,d)}{W_p(\Delta,d)}.$$
Then, at each iteration of the algorithm, we greedily select among all admissible actions on the current leaves the one with maximal efficiency, and update the adaptive tree accordingly.  

The implementation of this strategy in the complementary form, in which the cost is minimized subject to a certain error tolerance, is summarized in Algorithm \ref{alg:knapsack}. A node $\Delta \in \calT$ is called $h$-refinable if it is an internal node of the ambient tree $\Tcal$ and $p$-refinable if its current degree satisfies $d(\Delta) < d_{\max}$. Starting from an initial subtree $\calT_K \subset \calT$ (which could be just the root of $\mathcal{T}$) the algorithm iteratively performs the most efficient admissible refinement until a prescribed tolerance is reached, or no further refinement is possible. In particular, if the reference tree $\calT$ is only a truncated variant of the complete BGP tree, the prescribed tolerance might not be reached and the algorithm terminates with the reference tree $\calT$ as an output.  

\begin{algorithm}[htb]
\caption{Greedy knapsack $hp$-adaptivity}
\label{alg:knapsack}

\textbf{Input:} reference tree $\mathcal{T}$, initial tree
$\mathcal{T}_{K}\subseteq\mathcal{T}$, tolerance
$\texttt{tol}$, maximal degree $d_{\max} \geq 1$.

\vspace{1mm}

\For{each leaf $\Delta\in\mathcal{L}(\mathcal{T}_{K})$}{
    set $d(\Delta) \coloneqq  0$ 
    and compute the $p$-refinement efficiency $r_\Delta^p \coloneqq r_\Delta^p(1)$\;

    \If{$\Delta$ is $h$-refinable in $\mathcal{T}$}{
        compute the $h$-refinement efficiency
        $r_\Delta^h$ \;
    }
}

\While{$\mathcal{E}(\mathcal{T}_k)>\textnormal{\texttt{tol}}$ and an $h$-refinable or $p$-refinable leaf node exists}{

    select
    \[
    (\overline\Delta,\overline\rho)
    \in
    \arg\max
    \bigl\{
        r_\Delta^\rho :
        \Delta\in\mathcal L(\mathcal T_{K}),
        \ \Delta\text{ is }\rho\text{-refinable}, \ \rho\in\{h,p\}
    \bigr\};
    \]

    \eIf{
        $\overline\rho=p$
    }{
        perform a $p$-refinement of
        $\overline{\Delta}$ and set $d(\overline{\Delta}) := d(\overline{\Delta})+1$;
        

        \If{$\overline\Delta$ is still $p$-refinable}{
            compute
            $
            r_{\overline\Delta}^p
            :=
            r_{\overline\Delta}^p
            \bigl(d(\overline\Delta)+1\bigr)
            $ \;
        }
    }{
        perform an $h$-refinement of $\overline{\Delta}$ and add the children 
        $\overline{\Delta}_0$, $\overline{\Delta}_1$ of $\overline{\Delta}$ to the tree $\mathcal{T}_{K}$\;
        set $d(\overline{\Delta})=0$;

        \For{$i\in\{0,1\}$}{
            set $d(\overline{\Delta}_i) \coloneqq 0$ 
            and compute the $p$-refinement efficiency $r_{\overline{\Delta}_i}^p := r_{\overline{\Delta}_i}^p(1)$\;

        \If{$\overline{\Delta}_i$ is $h$-refinable}{
             compute the $h$-refinement efficiency
             $r_{\overline{\Delta}_i}^h$ \;
         }}
    }
    update the global approximation error
    $\mathcal{E}(\mathcal{T}_{K},d)$\;
}

\vspace{1mm}

\textbf{Output:}
the subtree $\Tcal_{K}$ and the degrees $d(\Delta)$ associated with each
leaf $\Delta\in\mathcal L(\mathcal T_{K})$.

\end{algorithm}

\subsection{Error-cost pruning}
In this section, we consider a complementary $hp$-adaptivity strategy 
inspired by the tree construction proposed in \cite{Binev18}. 
This second strategy for $hp$-adaptivity is based on pruning an already refined
tree through a bottom-up approach where we replace the approximation at the
leaves with a higher-degree polynomial at an ancestor node if it is more 
efficient. To this end, we pursue a two-phase strategy which consists in a first
expansion phase and a subsequent pruning phase. The implementation of this
procedure is described in Algorithm \ref{alg:pruning}. 

\begin{algorithm}[htb]
\caption{Adaptive error cost pruning (ECP)}
\label{alg:pruning}

\textbf{Input:}
reference tree $\mathcal \calT$, initial tree
$\mathcal \calT_{ECP} \subseteq\calT$, tolerance $\texttt{tol}$,
maximal degree $d_{\max}$.

\vspace{1mm}

Set $d(\Delta):=0$ for any $\Delta\in \calT_{ECP}$ and compute
\[
    \Ecal(\calT_{ECP},d)
    =
    \sum_{\Delta\in\mathcal L(\calT_{ECP})}
    e_{d(\Delta)}(\Delta);
\]

\While{
    $\Ecal(\calT_{ECP},d)>\textnormal{\texttt{tol}}$
    and an $h$-refinable leaf exists
}{
    select an $h$-refinable leaf
    $\overline{\Delta}\in\mathcal L(\calT_{ECP})$\;

    add the children
    $\overline{\Delta}_0,\overline{\Delta}_1 \in \calT$ of $\overline{\Delta}$ to the tree $\calT_{ECP}$ and set
    \[
        d(\overline{\Delta}_0)
        :=
        d(\overline{\Delta}_1)
        :=
        0;
    \]

    set $\Delta:=\overline{\Delta}$\;

    \While{$\Delta \neq \emptyset$ }{
        \If{ $d(\Delta)<d_{\max}$ and
            $\dim\mathbb P_{d(\Delta)+1}(\mathbb R^s)
                \leq
                \#\mathcal L(T_\Delta)$
        }{
            set
                $d(\Delta):=d(\Delta)+1$ \;
        }

        \eIf{$\Delta$ is the root of $\calT_{ECP}$}{
            set $\Delta = \emptyset$\;
        }{
            replace $\Delta$ by its parent\;
        }
    }

    update the error
    \[
        \mathcal E(\calT_{ECP},d)
        :=
        \sum_{\Delta\in\mathcal L(\calT_{ECP})}
        e_{d(\Delta)}(\Delta);
    \]
}

\vspace{2mm}

Visit the internal (non-leaf) nodes $\Delta\in\calT_{ECP}$ bottom-up
towards the root \;

\ForEach{visited internal node $\Delta$}{
    \If{ 
        $
            \displaystyle e_{d(\Delta)}(\Delta)
            \leq
            \sum_{\nabla\in\mathcal L(T_\Delta)}
            e_{d(\nabla)}(\nabla)
        $
    }{
        replace the subtree $T_\Delta$ rooted at $\Delta$ by the leaf $\Delta$\;
    }
}

\vspace{2mm}

\textbf{Output:}
The tree $\calT_{ECP}$, and the degrees $d(\Delta)$ associated with
$\Delta\in \calT_{ECP}$.
\end{algorithm}


All nodes are initially equipped with constant approximations. The first phase then performs successive $h$-refinements until the 
tree error falls below a given threshold. After each $h$-refinement, the polynomial degrees on the tree nodes along the path from the refined leaf to the root are updated.

More precisely, in each iteration, a selected leaf $\overline{\Delta}$ of the current tree $\calT_N$ is split to form a new tree $\calT_{N+1}$ with the children of $\overline{\Delta}$ added. Then, we traverse the path from 
$\overline{\Delta}$ back to the root of $\calT_{N+1}$ and update the polynomial degree of each ancestor $\Delta$ according to the number of leaves of the subtree $T_{\Delta}$ rooted in $\Delta$. If $d(\Delta)$ denotes the current degree associated to $\Delta$, 
it is increased to $d(\Delta)+1$ if 
\(c_{d(\Delta)+1}(\Delta)\leq \#\calL(T_\Delta)\) and \(d(\Delta)<d_{\max}\).
Thus, the polynomial degree at $\Delta$ is increased whenever the corresponding polynomial approximation requires fewer coefficients than the number of leafs in the subtree $T_{\Delta}$ rooted in $\Delta$.

After this expansion phase, we obtain a refined tree whose internal 
nodes are equipped with cost-efficient polynomial degrees. In the second phase, the tree is then traversed bottom-up and subtrees are pruned whenever the polynomial approximation on a node $\Delta$ is at least as accurate as the combined approximations on the leaves of $T_{\Delta}$. In particular, the descendants of a node $\Delta$ are removed from the adaptive tree if 
\begin{equation} \label{bound_leaves-ECP}
    e_{d(\Delta)}(\Delta) \leq 
    \sum_{\nabla\in\calL(T_\Delta)} e_{d(\nabla)}(\nabla).
\end{equation} 
In this case, the approximation linked to the subtree $T_{\Delta}$ is replaced by a single approximation on $\Delta$ without increasing the approximation error.  

The following result links the error and cost of the tree obtained after the initial expansion phase to the tree obtained after applying the error-cost pruning strategy in the second phase of Algorithm \ref{alg:pruning}.

\begin{theorem}
\label{thm:hp_pruning_complexity}
Let $\calT$ be a tree with total cost $\calC(\calT,d)$, constructed by the
expansion phase of Algorithm~\ref{alg:pruning}, and let
$\calT_{ECP}$ be the tree obtained after the subsequent error-cost pruning step. Then,
\[
    \calE(\calT_{ECP},d)
    \leq
    \calE(\calT,d),
\]
and
\[
    \calC(\calT_{ECP},d)
    \leq
    \calC(\calT,d).
\]
\end{theorem}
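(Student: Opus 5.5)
The plan is to compare the pruned tree with the expanded tree by looking at how the final leaf set partitions the leaves of the expanded tree. The error bound follows by a telescoping argument over the individual pruning operations. The cost bound uses an invariant of the expansion phase: a node only ever receives a degree whose number of coefficients is at most the number of leaves below it.

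\emph{Error.} The pruning phase is a finite sequence of single operations. Each operation replaces the current subtree $T_\Delta$ by the leaf $\Delta$, and it is performed only if \eqref{bound_leaves-ECP} holds, i.e. $e_{d(\Delta)}(\Delta)\le\sum_{\nabla\in\calL(T_\Delta)}e_{d(\nabla)}(\nabla)$. Here the sum runs over the \emph{current} leaves of $T_\Delta$, some of which may themselves be the result of earlier prunings. All leaves outside $T_\Delta$ are untouched, so the total error $\calE(\cdot,d)$ does not increase in any single step. Chaining these steps from the expanded tree $\calT$ to $\calT_{ECP}$ gives $\calE(\calT_{ECP},d)\le\calE(\calT,d)$.

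\emph{Invariant from the expansion phase.} For every node $\Delta$ of the expanded tree I claim
\[
c_{d(\Delta)}(\Delta)=\dim\bbP_{d(\Delta)}(\bbR^s)\le\#\calL(T_\Delta),
\]
where $T_\Delta$ is the subtree of the expanded tree $\calT$ rooted at $\Delta$. The degree $d(\Delta)$ is raised to some value $d'$ only when $\dim\bbP_{d'}(\bbR^s)\le\#\calL(T_\Delta)$ holds at that moment. The expansion phase only ever performs $h$-refinements, so the number of leaves below $\Delta$ never decreases afterwards. Nodes that were never raised have degree $0$ and cost $1\le\#\calL(T_\Delta)$. Hence the inequality holds in the final expanded tree. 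I also use that every leaf carries at least one coefficient, $c_{d}(\nabla)\ge 1$.

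\emph{Cost.} The leaves of $\calT_{ECP}$ are of two kinds. Some are leaves of $\calT$ that were never removed. The others are maximal pruned nodes $\Delta$, meaning pruned nodes that lie below no other pruned node in the final tree. The sets $\calL(T_\Delta)$ for these maximal pruned $\Delta$, together with the singletons of surviving original leaves, partition $\calL(\calT)$. For a maximal pruned node the invariant gives
\[
c_{d(\Delta)}(\Delta)\le\#\calL(T_\Delta)\le\sum_{\nabla\in\calL(T_\Delta)}c_{d(\nabla)}(\nabla).
\]
This compares with the costs in the \emph{expanded} tree, not with the intermediate pruned state, which is what makes the argument clean. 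Summing over the partition yields $\calC(\calT_{ECP},d)\le\calC(\calT,d)$.

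The only delicate point I expect is this bookkeeping. One must check that the invariant refers to leaf counts in the expanded tree, and that later pruning of descendants does not spoil the comparison. Using maximal pruned nodes avoids that issue entirely.
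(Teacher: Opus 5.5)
Your proposal is correct and follows the paper's proof. The cost bound is the same chain $c_{d(\Delta)}(\Delta)\le\#\calL(T_\Delta)\le\sum_{\nabla\in\calL(T_\Delta)}c_{d(\nabla)}(\nabla)$ summed over the leaves of $\calT_{ECP}$, and the error bound comes from the pruning criterion \eqref{bound_leaves-ECP}.

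Your step-by-step treatment of the error makes explicit something the paper's one-line chain over the expanded subtrees leaves implicit: the pruning test is applied to the current, possibly already pruned, subtree $T_\Delta$. Likewise, your monotone-leaf-count argument supplies the justification for the invariant that the paper simply asserts "by construction".
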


\begin{proof}
We note that Algorithm \ref{alg:pruning} constructs 
$\calT_{\mathrm{ECP}}$ from the expanded tree $\calT$ by traversing it from the leaves to the root by pruning a subtree $T_\Delta$ if 
    $$e_{d(\Delta)}(\Delta) 
    \leq \sum_{\nabla \in \calL(T_\Delta)}e_{d(\nabla)}(\nabla).$$
Moreover, by construction, we always have
    $$c_{d(\Delta)}(\Delta) \leq \# \calL(T_\Delta).$$
Hence, we get that 
\begin{align*}
  \calC(\calT_{ECP},d) 
=& \sum_{\Delta\in\calL(\calT_{ECP})}
c_{d(\Delta)} \\ \leq& \sum_{\Delta\in\calL(\calT_{ECP})}\#\calL(T_\Delta) 
    = \sum_{\Delta\in\calL(\calT)} 1 
    \\ \leq& \sum_{\Delta\in\calL(\calT)} c_{d(\Delta)} = \calC(\calT,d),  
\end{align*}
    as well as
    \begin{align*}
    \calE(\calT_{ECP},d) &= \!\!\! 
    \sum_{\Delta\in\calL(\calT_{ECP})} \!\!\!
    e_{d(\Delta)}(\Delta,d) \\ \leq& \!\!\! \sum_{\Delta\in\calL(\calT_{ECP})} 
    \sum_{\nabla\in\calL(T_\Delta)} e_{d(\nabla)}(\nabla) 
    \\ =& \!\!\! \sum_{\Delta\in\calL(\calT)} e_{d(\Delta)}(\Delta) = \calE(\calT,d).
    \end{align*}

\end{proof}

The ECP phase in Algorithm \ref{alg:pruning} thus ensures that the polynomial approximation error and the respective cost associated with each leaf of the pruned tree do not exceed the error and the cost of the expanded subtree it replaces. 

The ECP $hp$-refinement strategy of Algorithm \ref{alg:pruning} can be considered as complementary to the greedy knapsack approach, which selects between $h$- and $p$-refinements based on their respective efficiencies. In particular, the expansion phase in Algorithm \ref{alg:pruning} can be used to further refine branches of an initial tree by $h$-refinements and to include higher-order polynomial approximations at the internal nodes. While the expansion phase can increase the cost relative to an initial knapsack tree, it provides additional flexibility to reduce the approximation error. The subsequent pruning phase then removes subtrees whenever the respective approximation can be replaced by a higher-degree polynomial approximation with no larger error and no greater cost. In the wedgelet framework, this pruning step provides an additional storage benefit. A reduced number of leaves leads to a respectively reduced number of landmarks that need to be stored for the representation of the wedgelet tree.

\section{Polynomial function spaces for general graphs} 
\label{sec:gen_graph}
The two $hp$-refinement strategies for graph signal approximation introduced in the previous section assume 
that the underlying graph can be embedded into a Euclidean domain 
$\Omega\subset\bbR^s$. This embedding allows the usage of polynomial basis systems for local signal approximation. If the embedding is approximately isometric, local graph distances can moreover be related to respective Euclidean distances in $\Omega$. 

For general graphs, however, a single low-distortion Euclidean embedding may not exist. We therefore adopt a patch-wise local embedding strategy similar to the one introduced in \cite{EGMQ25}. For this, we partition the graph into $\rho$ mutually disjoint subgraphs $U_r$, $r\in \{1,\dots,\rho\}$, representing meaningful local neighborhoods. To identify these sub-graphs, we employ standard graph partitioning algorithms based on multilevel heavy edge matching, as implemented in \texttt{METIS} \cite{KK97,KK98}. 
Alternative methods, such as recursive bisection via spectral clustering \cite{vLux07}, could be used as well. If the graph vertices are samples of an unknown manifold, this decomposition mimics an atlas, with each patch playing the role of a discrete local chart.

Each patch is equipped with a low-dimensional coordinate map
$$\phi_r\colon U_r\to\bbR^m, $$
which provides the coordinates needed to define polynomial spaces for the $p$-refinements. Such embeddings can be constructed using multidimensional scaling \cite{book:coxcox,book:LeeVerleysen}, or, in the presence of an
underlying manifold structure, more specialized methods such as \texttt{ISOMAP},
see \cite{TdSL00}, or its computationally efficient variant \texttt{L-ISOMAP}
\cite{ST02}. Other dimensionality-reduction techniques such as
 \texttt{t-SNE} \cite{vdMH08} and \texttt{UMAP} \cite{MHM20} may also be applied. 

The embeddings naturally induce polynomial functions on each graph
patch. Given a multivariate monomial 
$${\bs x}^{\bs \alpha} = x_1^{\alpha_1} x_2^{\alpha_2}\cdots x_m^{\alpha_m}, \quad {\bs \alpha} = (\alpha_1,\dots,\alpha_m)\in\bbN^m, $$
we define the corresponding monomial function on the patch $U_r$ as
$$ {\bs X}^{\bs \alpha} = {\bs x}^{\bs \alpha} \circ \phi_r, $$
so that ${\bs X}^{\bs \alpha}(\node{v}) = \phi_r(\node{v})^{\bs \alpha}$ for $\node{v} \in U_r$. 
A separate BGP tree for $hp$-refinement is then constructed on each patch, providing in fact a forest for the approximation of the graph signal.

The embedding dimension $m$ directly controls the approximation
cost. For a tree node $\Delta$ associated with a patch $U_r$, the number of coefficients required for a polynomial approximation of degree $d(\Delta)$ depends on the dimension $m$ of the local embedding space. Hence, a selection $m\ll s$, can considerably mitigate the curse of dimensionality and improve the efficiency of both refinement strategies in $hp$-adaptivity. 

Moreover, for low dimensions $m$ the cost $W_p(\Delta,d)$ 
in the greedy knapsack approach of Algorithm \ref{alg:knapsack} becomes smaller and thus increases the efficiency ratio $r^p_\Delta$ for the $p$-refinement.
Similarly, in the ECP approach of Algorithm \ref{alg:pruning}, a smaller polynomial cost makes the bound $c_{d(\Delta)+1}(\Delta)\leq \#\calL(T_\Delta)$, imposed on the number of leaves of the sub-tree rooted at $\Delta$, easier to be satisfied.

\section{Numerical Tests} \label{sec:num_test}
In this section, we evaluate the performance of the proposed $hp$-adaptivity 
strategies on various discrete multidimensional signals. We test the two
$hp$-refinement steps separately as well as in a combined way. The numerical experiments are conducted on a
standard RGB image, the ERA5 climate dataset \cite{ERA5}, and a 3D MRI scan from the
UCSD-PTGBM dataset \cite{MRIdataset}. These different datasets allow us to 
thoroughly assess the robustness of our approaches across different dimensions 
and signals. We compare the introduced methods with the near-optimal $hp$-tree construction introduced in \cite{Binev18}, adapted here so that only the polynomial degree is varied instead of enriching the polynomial basis by one element at each refinement step. As a reference tree, we use a BGP tree generated in terms of a fully-adaptive (FA) greedy wedge splitting, as described in \cite{Erb23}. To reduce the cost of the fully adaptive wedge splits, the initial tree $\calT_K$ is obtained by $L$ levels of uniform binary pre-partitioning, bisecting each cell along the longest side of its bounding box (at the median for ERA5, so that both children have the same cardinality). We use $L=13$ ($2^{13}$ cells) for the MRI and the image, $L=11$ for ERA5, and $L=8$ within each patch of the Swiss roll. These initial cells start as leaves with a constant approximation.

Due to the fact that in the wedgelet setting, once we partition we introduce a new center that need to be store, we simplify and set $\lambda=1$ as cost for storing the partition.

Through the numerical experiments, we validate the quality of the approximation 
$\tilde{f}$ of an $m$-variate discrete signal $f\in\bbR^{N\times m}$ by the 
column-normalised mean-square error
$$ 
\Ecal(\Tcal) = \frac{1}{N} \| (\tilde{f}-f)D_m\|^2_F, 
$$
with $D_m 
= \mathrm{diag} (s_1,\dots,s_m)$, $s_j 
= \frac{1}{\max_{1\leq i \leq N} |f_{ij}|}$, and where $\|\cdot \|_F$ denotes the Frobenius norm. We employ the rescaling $D_m$ such
that $\Ecal(\Tcal)$ is dimensionless and a unique threshold can be used across
different signals and patches. In all experiments, we compare the number of
coefficients $\calC(\Tcal,d)$ (corresponding to the cost), and the number of leaves $\#\calL(\Tcal)$ (corresponding to the number of landmarks in the wedgelet regime), employed to reach a total error smaller or equal to \texttt{tol}$=10^{-4}$.

In the subsequent tables, the first column ($h$-max) contains the values associated to a pure greedy $h$-refinement in which the leaf with the maximum error is refined by a fully-adaptive wedge split \cite{Erb23}. The
second column ($h$-Binev) follows Binev's original strategy for $h$-refinement \cite{Binev18}. The third column ($hp$-K) gives the values for the
$hp$-adaptive tree obtained via the greedy knapsack algorithm. The fourth column
($hp$-ECP) reports the values obtained by applying the ECP strategy to the $h$-refinement tree of the first column.   
The fifth column ($hp$-K+ECP) combines the knapsack method in Algorithm \ref{alg:knapsack} with the error-cost pruning in Algorithm \ref{alg:pruning}. Finally, the last column ($hp$-Binev) reports the respective values obtained with Binev's near-optimal $hp$-strategy described in \cite{Binev18}.

In all tables, the most favourable values are highlighted in
bold. For all methods involving polynomial refinement, the maximum polynomial degree is set to $d_{\max} = 5$.

The first dataset consists of a 3D MRI scan with resolution
$256 \times 256 \times 256$. This corresponds to $16\,777\,216$ node points.
Three slice projections of this dataset in the axial, coronal, and sagittal directions,
from left to right, are depicted in Figure~\ref{fig:MRI}. In
Table~\ref{tab:complex_MRI}, we report the respective cost (number of polynomial coefficients) and the number of leaves (number of landmarks) for the obtained trees. Combining both numbers as relevant storage costs, we observe that applying only the
greedy knapsack criterion reduces these by $15.7\%$, whereas applying
only error-cost pruning yields an improvement of $18.9\%$. When
combining the two methods, the reduction increases to $25.7\%$, which
outperforms the $15.5\%$ improvement obtained with Binev's
$hp$-near-optimal construction. 
\begin{table}[htb]
\centering
\begin{tabular}{c|c|c|c|c|c|c}
 & $h$-max & $h$-Binev & $hp$-K & $hp$-ECP & $hp$-K+ECP & $hp$-Binev \\
\hline
    $\calC(\calT)$ & $74\,487$ & $76\,419$ & $73\,570$ & $\bm{66\,094}$ & 
    $71\,186$ & $69\,075$ \\
\hline
$\#\calL(\calT)$ & $74\,487$ & $76\,419$ & $52\,032$ & $54\,731$ & 
$\bm{39\,523}$ & $56\,882$ \\
\hline
$\calC(\calT)+\#\calL(\calT)$ & $148\,974$ & $152\,838$ & $125\,602$ & $120\,825$ & $\bm{110\,709}$ & $125\,957$
\end{tabular}
\caption{Comparison of complexity (total number of polynomial coefficients) and leaf 
cardinalities (number of landmarks) for various $h$- and $hp$-adaptive approximation 
strategies applied to the 3D MRI scan.}
\label{tab:complex_MRI}
\end{table}

\begin{figure}[htb]
\centering
\begin{tikzpicture}
	\draw(-5,0) node{\includegraphics[scale = 0.5, clip]{%
    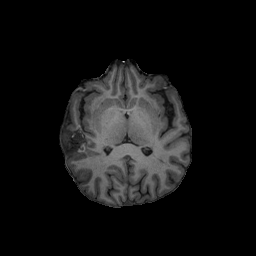}};
    \draw(0,0) node{\includegraphics[scale = 0.5, clip]{%
      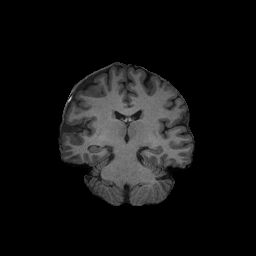}};
    \draw(5,0) node{\includegraphics[scale = 0.5, clip]{%
      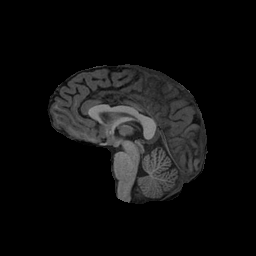}};
\end{tikzpicture}
\caption{Three slice projections (axial, coronal, and sagittal) of a 
  $256 \times 256 \times 256$ 3D MRI scan from the UCSD-PTGBM dataset, 
  used for evaluating the $hp$-adaptive refinement strategies.
} \label{fig:MRI}
\end{figure}

For the ERA5 dataset, we preprocess the data by resampling it via an
interpolation procedure at the Fibonacci lattice points on the unit sphere. This is done to avoid clustering near the poles caused by the original cylindrical
representation of the data. We consider a total of $1\,038\,240$ points. 
The graph signal corresponds to the 2-meter temperature, in
degrees Celsius, measured at midday on June 24, 2024, see
Figure~\ref{fig:ERA5}.

We report the approximation cost and the number of tree leaves in
Table~\ref{tab:complex_ERA}. In terms of storage, the greedy knapsack
criterion and ECP yield improvements of $11.3\%$ and
$15\%$, respectively. Their combined application results in a $19.0\%$
reduction, exceeding the $12.1\%$ gain achieved with Binev's
$hp$-near-optimal construction.

Figure~\ref{fig:poly_ERA} depicts the polynomial degrees used in each
patch induced by the tree partition.

\begin{table}[htb]
\centering
\begin{tabular}{c|c|c|c|c|c|c}
 & $h$-max & $h$-Binev & $hp$-K & $hp$-ECP & $hp$-K+ECP & $hp$-Binev \\
\hline
 $\calC(\calT)$ & $13\,219$  & $13\,227$  & $13\,062$  & $\bm{12\,750}$ & 
 $13\,719$ & $13\,132$ \\
\hline
$\#\calL(\calT)$ &  $13\,219$ & $13\,227$ & $10\,380$ & $9\,718$ &
$\bm{7\,703}$ & $10\,097$ \\
\hline
$\calC(\calT)+\#\calL(\calT)$ & $26\,438$ & $26\,454$ & $23\,442$ & $22\,468$ &
$\bm{21\,422}$ & $23\,229$ \\ 
\end{tabular}
\caption{Comparison of complexity (total number of polynomial coefficients) and leaf 
cardinalities (number of landmarks) for various $h$- and $hp$-adaptive 
approximation strategies applied to the ERA5 dataset.}
\label{tab:complex_ERA}
\end{table}

\begin{figure}[htb]
\centering
\begin{tikzpicture}
	\draw(0.5,0) node{\includegraphics[scale = 0.055,clip,trim = 750 0 750 0]
    {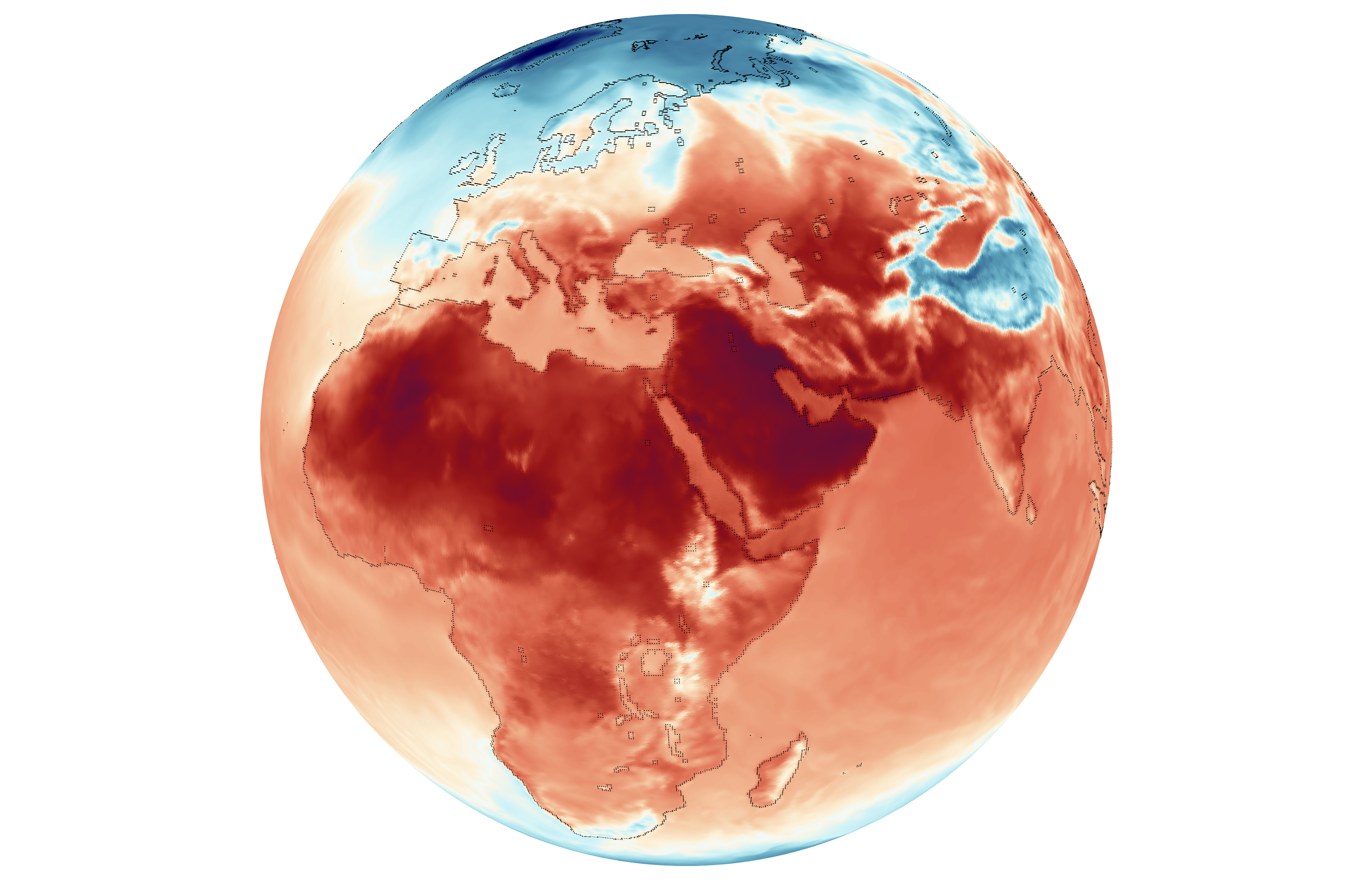}};
\end{tikzpicture}
\caption{Visualization of the discrete signal on the ERA5 climate dataset,
  representing the 2-meter level temperature in degrees Celsius at midday
  on June 24, 2024.
} \label{fig:ERA5}
\end{figure}

\begin{figure}[htb]
\centering
\begin{tikzpicture}
	\draw(-3.5,3.5) node{%
    \includegraphics[scale = 0.17, clip,trim = 850 300 850 100]{%
  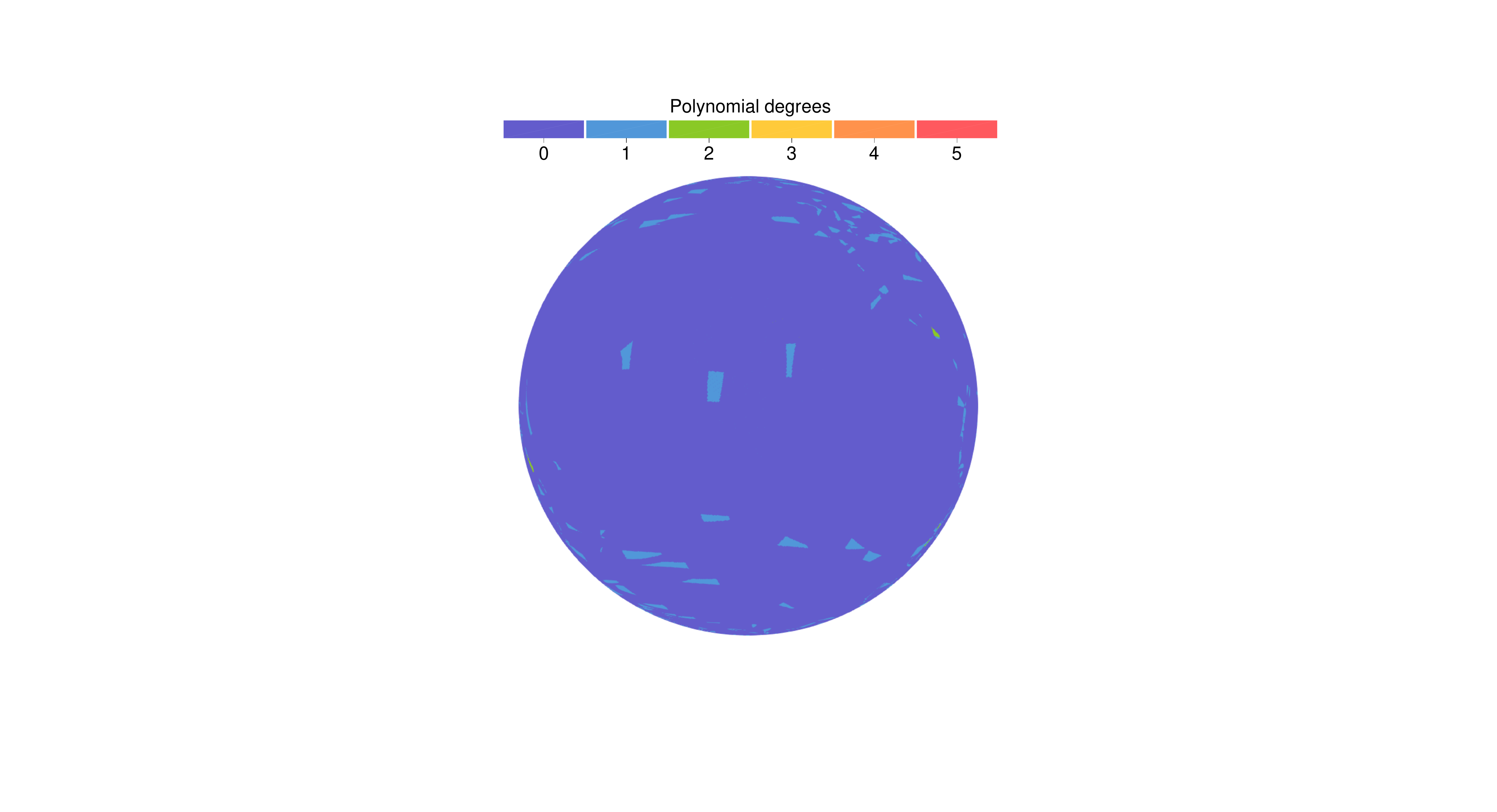}};
    \draw(3.5,3.5) node{%
      \includegraphics[scale = 0.17, clip,trim = 850 300 850 100]{%
    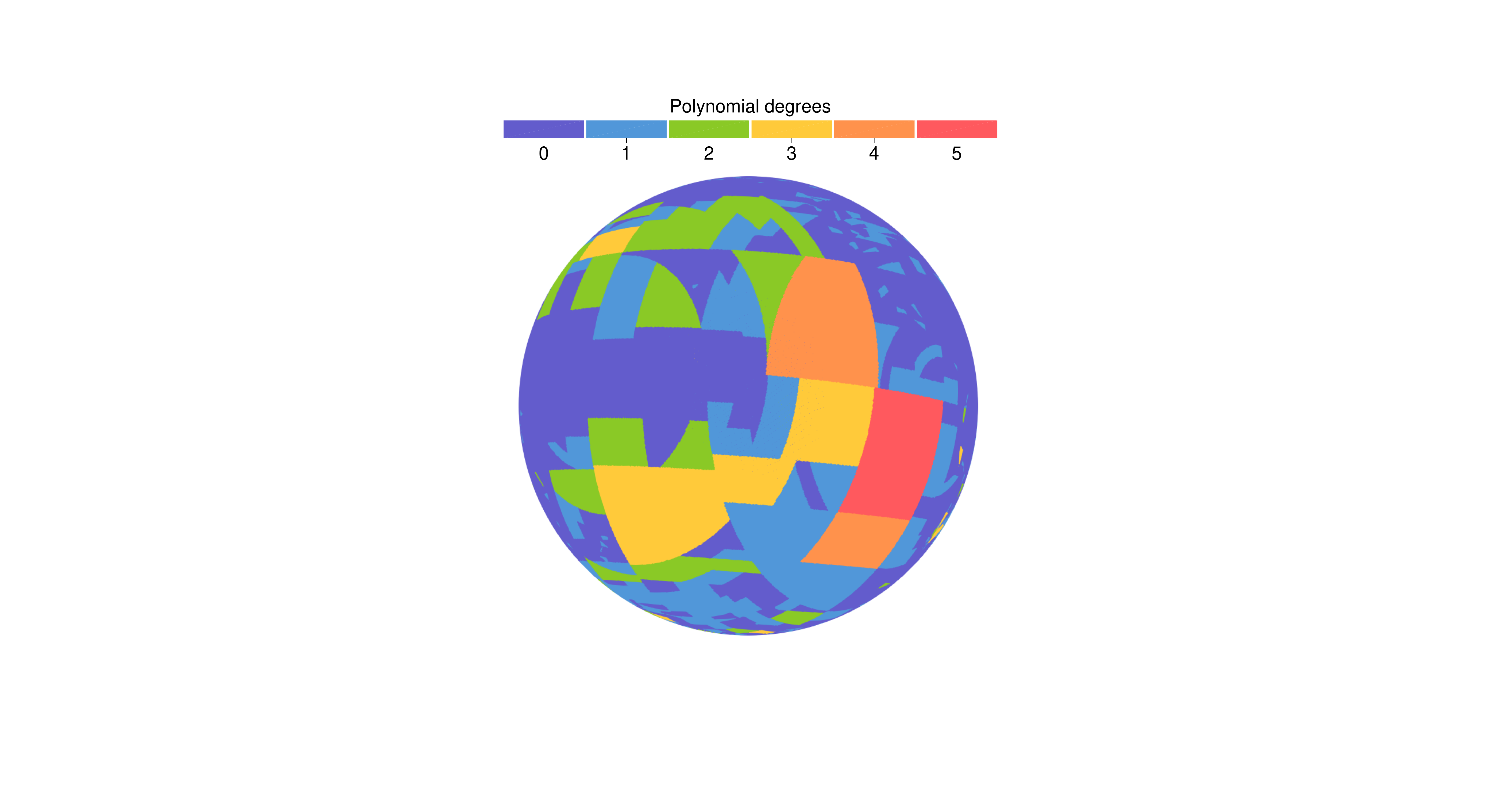}};
    \draw(-3.5,-3.5) node{%
      \includegraphics[scale = 0.17, clip,trim = 850 300 850 100]{%
    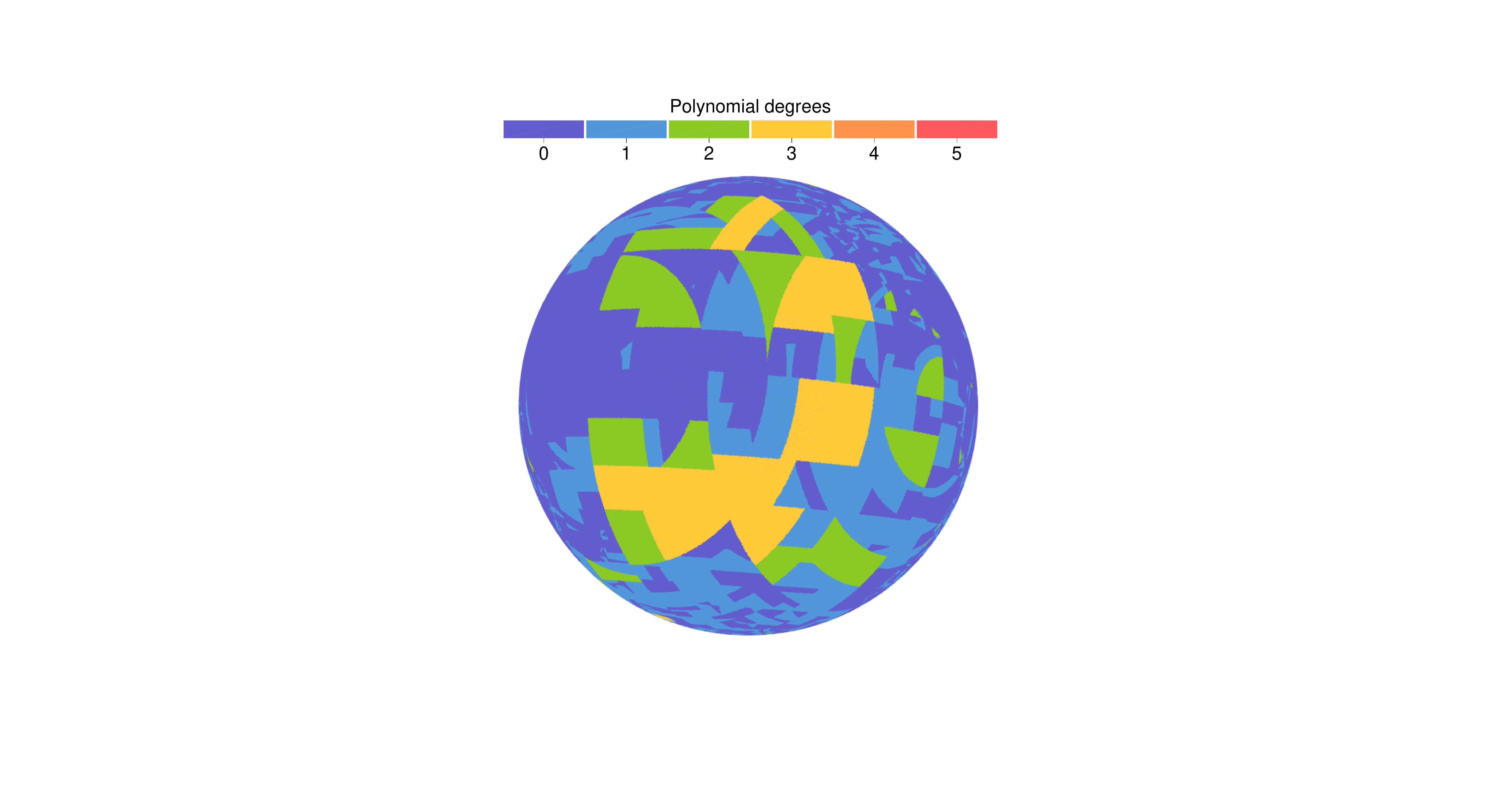}};
    \draw(3.5,-3.5) node{%
      \includegraphics[scale = 0.17, clip,trim = 850 300 850 100]{%
    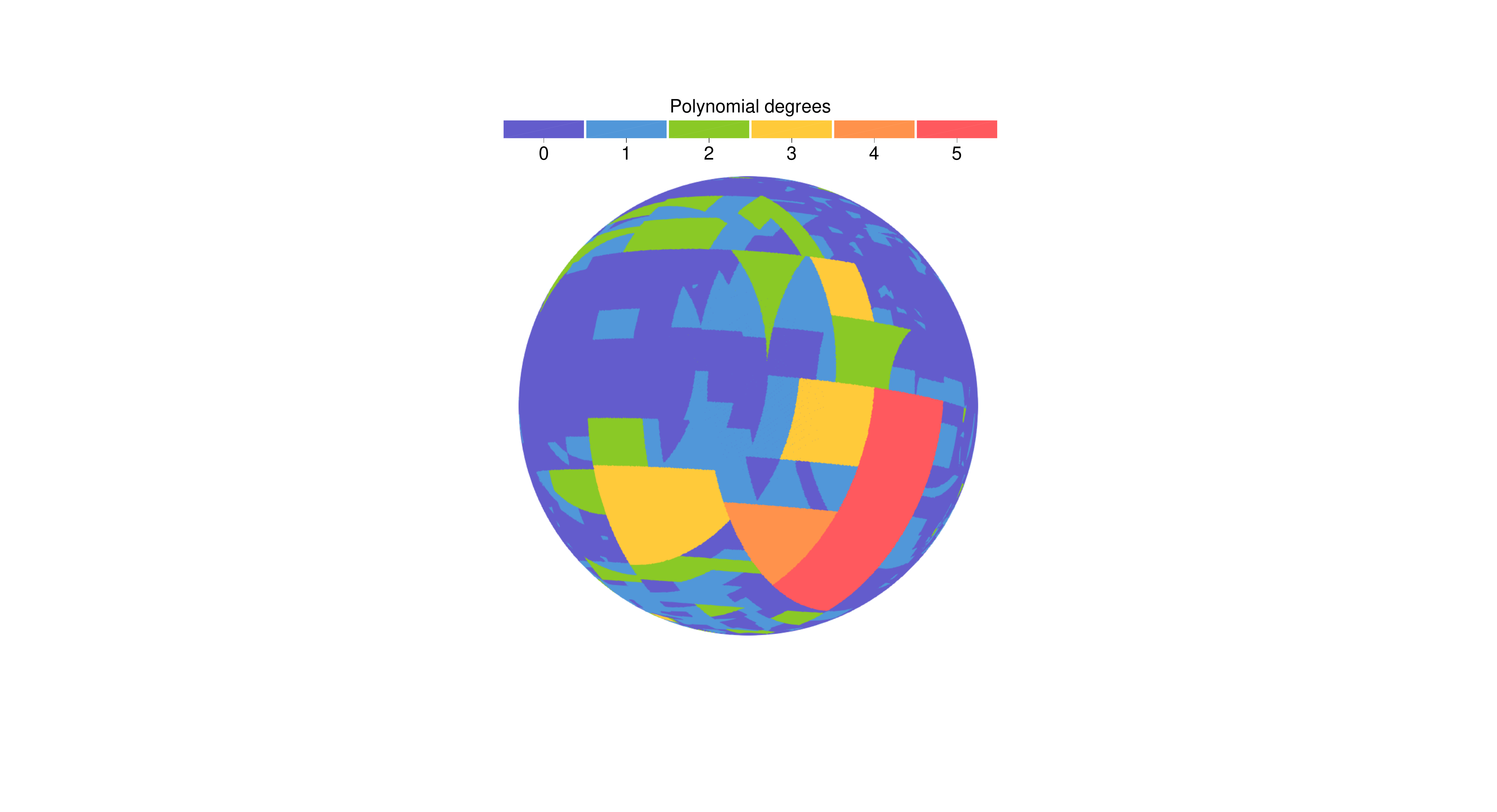}};
\end{tikzpicture}
\caption{Spatial distribution of the polynomial degrees employed in each tree
  partition patch for the ERA5 dataset across four $hp$-adaptive strategies: 
  $hp$-Knapsack (top-left), $hp$-ECP (top-right), $hp$-K+ECP (bottom-left)
  and $hp$-Binev (bottom-right).
} \label{fig:poly_ERA}
\end{figure}

The following test considers an image of the Tower of Bel\'em in Lisbon, see
Figure~\ref{fig:Belem}. The image has a resolution of
$4080 \times 3072$, corresponding to $12\,533\,760$ points per color
channel.

Table~\ref{tab:complex_Belem} reports the cost and the number of
tree leaves for the individual RGB channels. The greedy knapsack
criterion and the ECP method yield storage reductions of
approximately $11.8\%$ and $9.3\%$, respectively, across the three
channels. Specifically, the reductions for the greedy knapsack are
$11.4\%$ (R), $12.3\%$ (G), and $11.6\%$ (B), while for error-cost pruning
they are $9.4\%$ (R), $9.5\%$ (G), and $9.0\%$ (B).

Their combined application results in reductions of $14.4\%$ (R),
$15.3\%$ (G), and $14.5\%$ (B), yielding a total reduction of $14.7\%$.
In comparison, Binev's near-optimal $hp$-refinement achieves a total gain of $2.0\%$ and channel-wise improvements of $2.2\%$ (R), $2.2\%$ (G), and $1.8\%$ (B), respectively.

Figure~\ref{fig:poly_Belem} depicts the polynomial degrees used in each
patch induced by the tree partition for the red channel.
\begin{table}[htb]
\centering
\small
\begin{tabular}{c|c|c|c|c|c|c}
 & $h$-max &  $h$-Binev & $hp$-K & $hp$-ECP & $hp$-K+ECP & $hp$-Binev \\
\hline
 $\calC(\calT_{red})$  &  $788\,837$ & $798\,733$  &  $789\,144$ & 
 $\bm{776\,479}$ & $864\,324$ & $844\,297$ \\
\hline
$\#\calL(\calT_{red})$  & $788\,837$ & $798\,733$ & $608\,599$ & 
$653\,455$ & $\bm{486\,532}$ & $699\,856$ \\
\hline
$\calC(\calT_{red})+\#\calL(\calT_{red})$  & $1\,577\,674$ & $1\,597\,466$ & $1\,397\,743$ & 
$1\,429\,934$ & $\bm{1\,350\,856}$ & $1\,544\,153$ \\
\hline
 $\calC(\calT_{green})$ & $779\,071$  & $785\,898$  &  $773\,144$ & 
 $\bm{766\,616}$ & $845\,284$ & $833\,406$ \\
\hline
$\#\calL(\calT_{green})$ & $779\,071$ & $785\,898$ & $593\,047$ & 
$643\,564$ & $\bm{474\,667}$ & $691\,108$ \\
\hline
$\calC(\calT_{green})+\#\calL(\calT_{green})$  & $1\,558\,142$ & $1\,597\,466$ & $1\,366\,191$ & 
$1\,410\,180$ & $\bm{1\,319\,951}$ & $1\,524\,514$ \\
\hline
 $\calC(\calT_{blue})$  & $771\,179$ & $778\,794$ & $768\,388$  & 
 $\bm{759\,737}$ & $841\,469$ & $825\,331$ \\
\hline
$\#\calL(\calT_{blue})$ & $771\,179$ & $778\,794$ & $594\,517$ & 
$643\,765$ & $\bm{477\,499}$ & $689\,746$ \\
\hline
$\calC(\calT_{blue})+\#\calL(\calT_{blue})$  & $1\,542\,358$ & $1\,557\,588$ & $1\,362\,905$ & 
$1\,403\,502$ & $\bm{1\,318\,968}$ & $1\,515\,077$ \\
\hline
\hline
 $\calC(\calT_{total})$  &  $2\,339\,087$ & $2\,363\,425$  & 
 $2\,330\,676$ & $\bm{2\,302\,832}$  & $2\,551\,077$ & $2\,503\,034$ \\
\hline
$\#\calL(\calT_{total})$ & $2\,339\,087$ & $2\,363\,425$ & 
$1\,796\,163$ & $1\,940\,784$ & $\bm{1\,438\,698}$ &  $2\,080\,710$ \\
\hline
$\calC(\calT_{total})+\#\calL(\calT_{total})$  & $4\,678\,174$ & $4\,726\,850$ & $4\,126\,839$ & 
$4\,243\,616$ & $\bm{3\,989\,775}$ & $4\,583\,744$ \\ 
\end{tabular}
\caption{Complexities and number of leaves for the Tower of Bel{\'e}m image,
evaluated across individual RGB channels and in total.}
\label{tab:complex_Belem}
\end{table}

\begin{figure}[htb]
\centering
\begin{tikzpicture}
	\draw(-4.8,0) node{\includegraphics[scale = 0.075,clip]
    {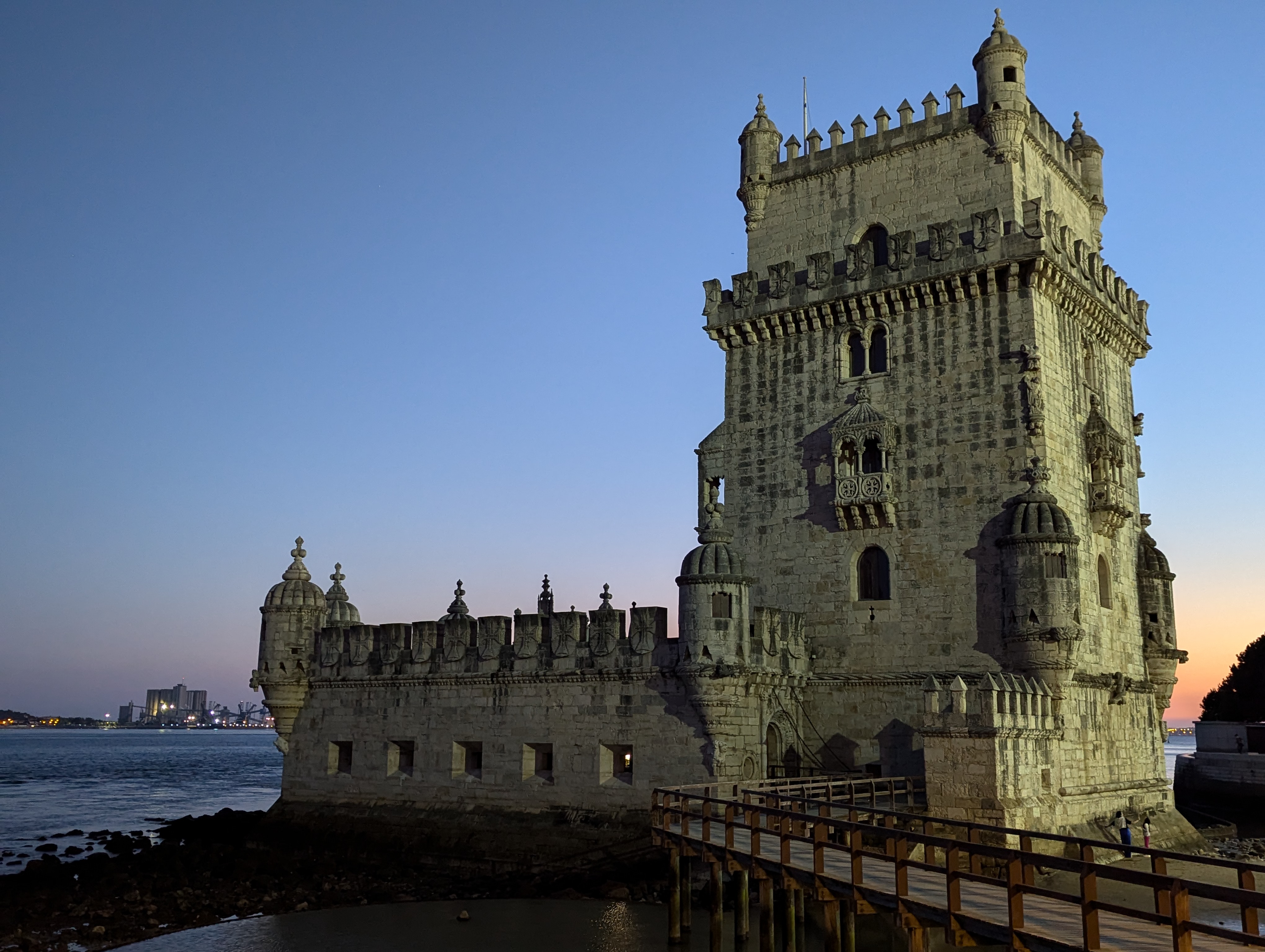}};
    \draw(2.7,2.69) node{\includegraphics[scale = 0.056,clip]
    {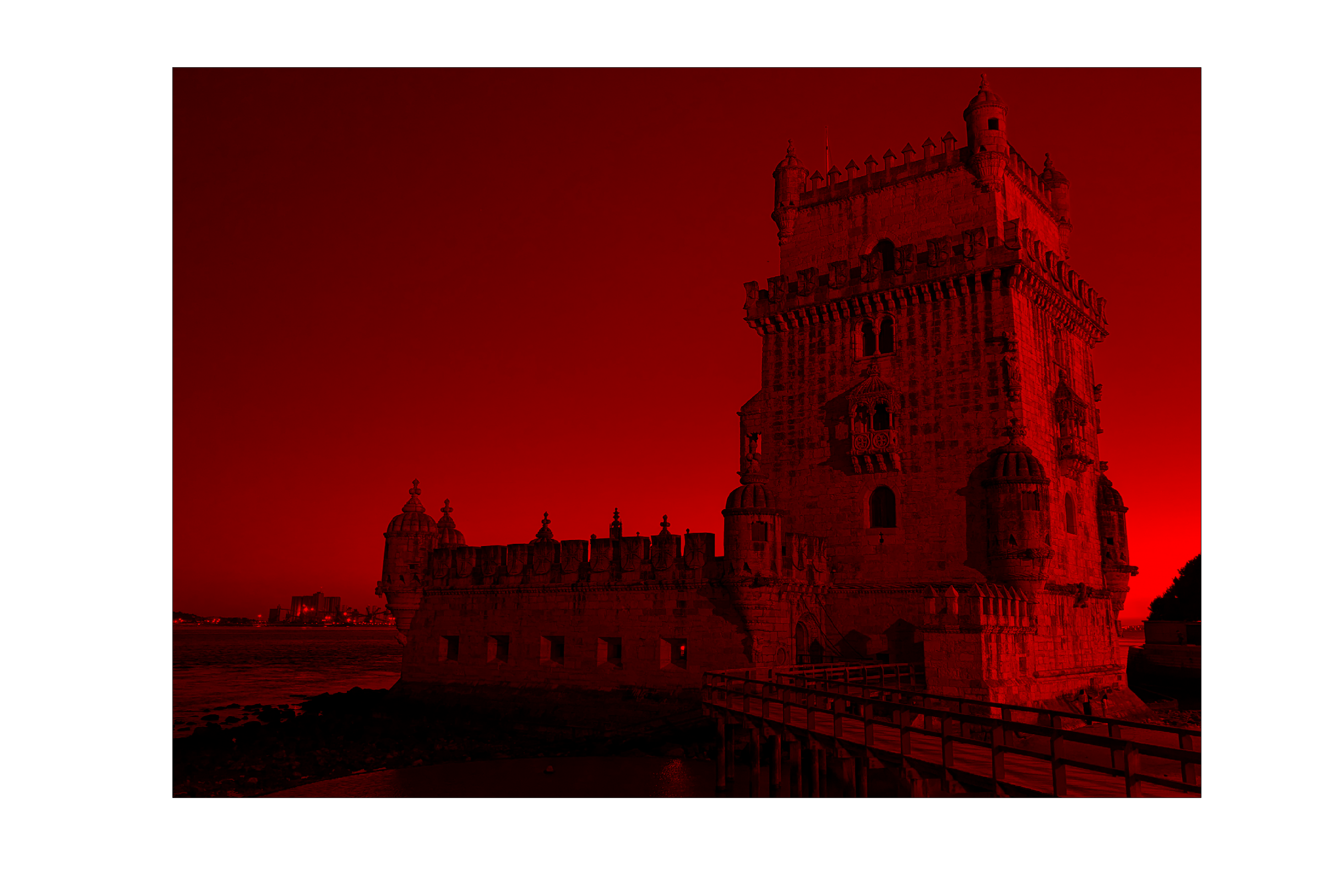}};
    \draw(2.7,0) node{\includegraphics[scale = 0.056,clip]
    {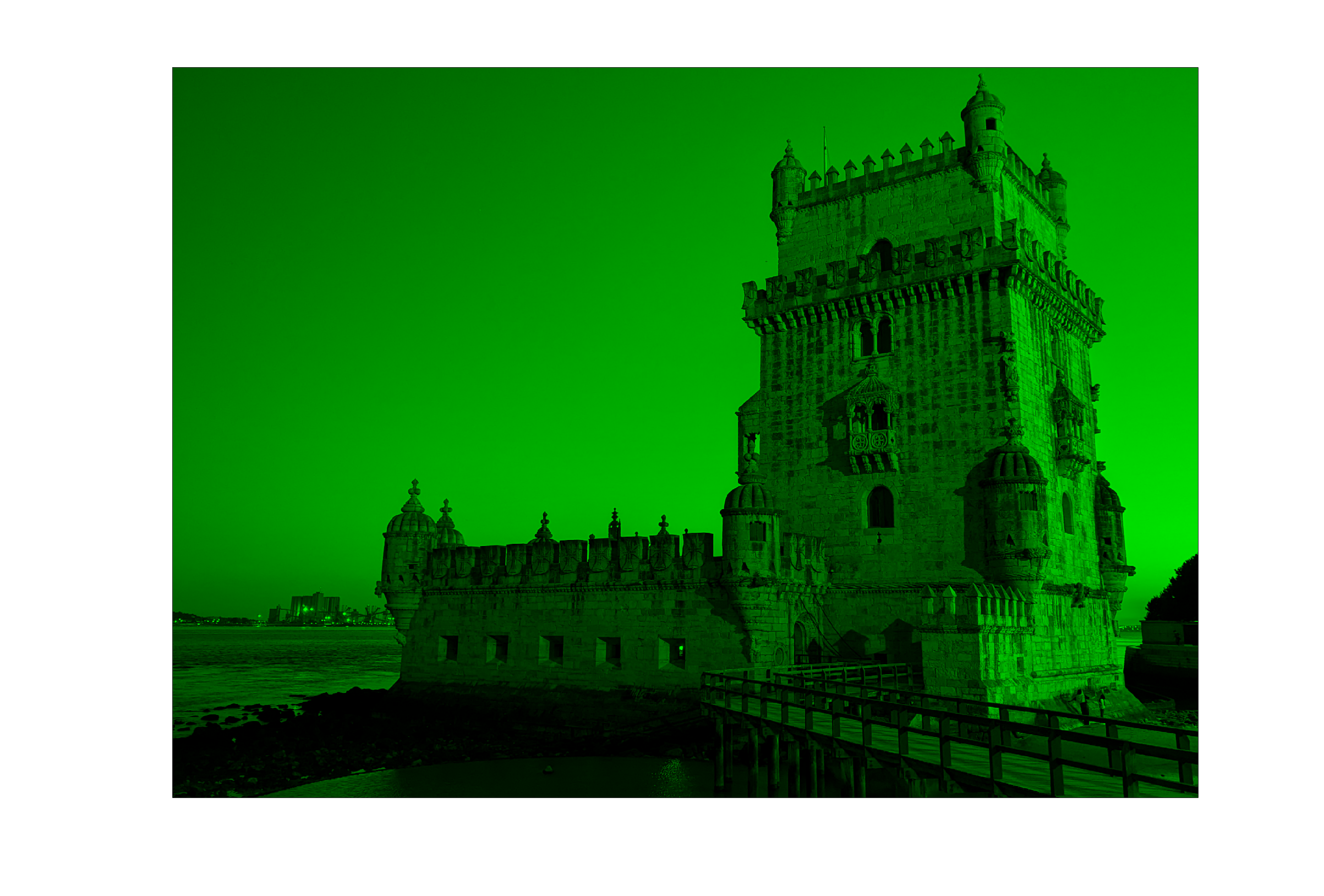}};
    \draw(2.7,-2.78) node{\includegraphics[scale = 0.056,clip]
    {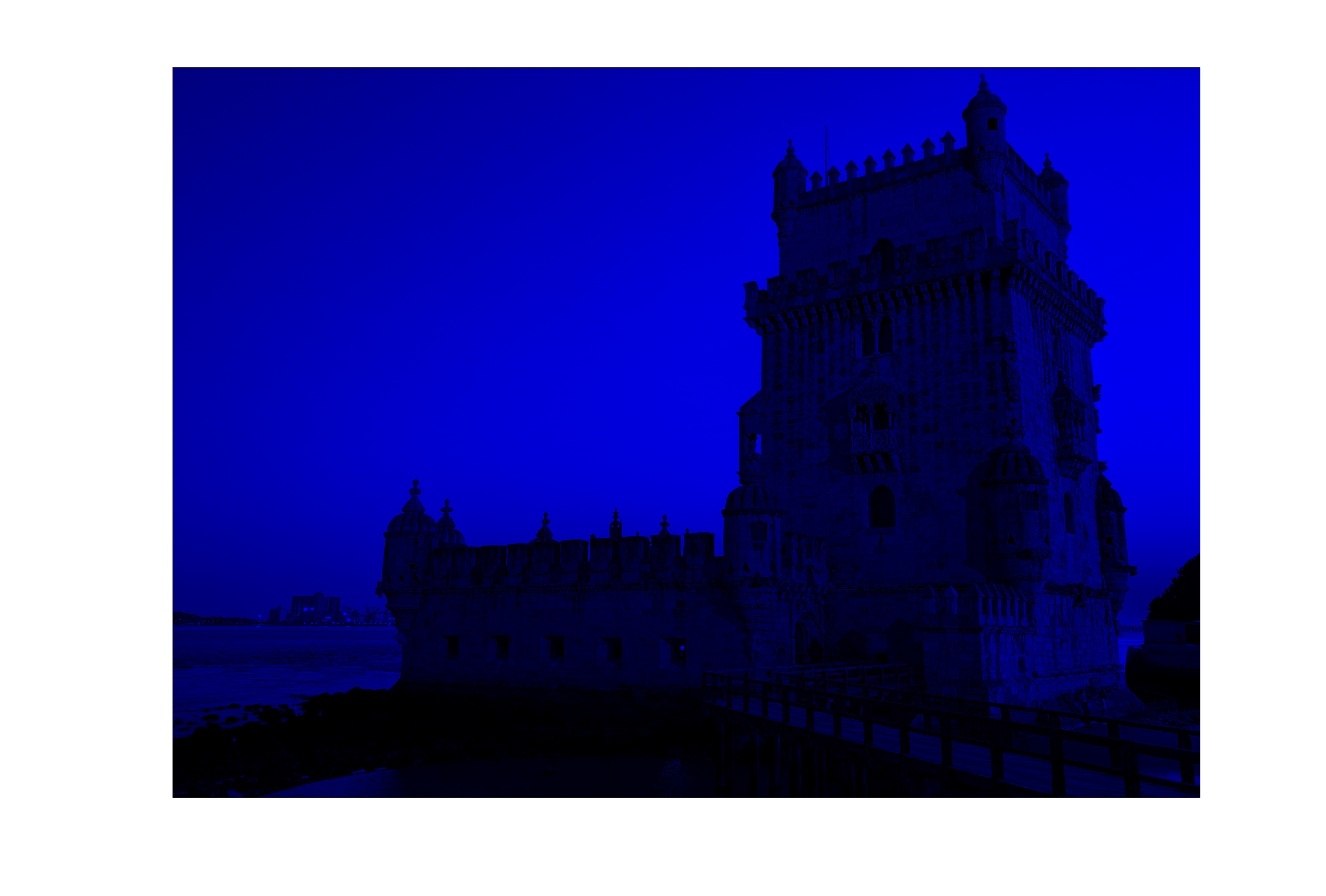}};
\end{tikzpicture}
\caption{High-resolution ($4080 \times 3072$) photograph of the Tower of 
  Bel{\'e}m in Lisbon, utilized as a multidimensional RGB test signal for 
  evaluating the $hp$-adaptive refinement strategies.
} \label{fig:Belem}
\end{figure}

\begin{figure}[htb]
\centering
\begin{tikzpicture}
\draw(-3.5,3.3) node{\includegraphics[scale = 0.16, clip,trim = 700 200 700 100]{%
  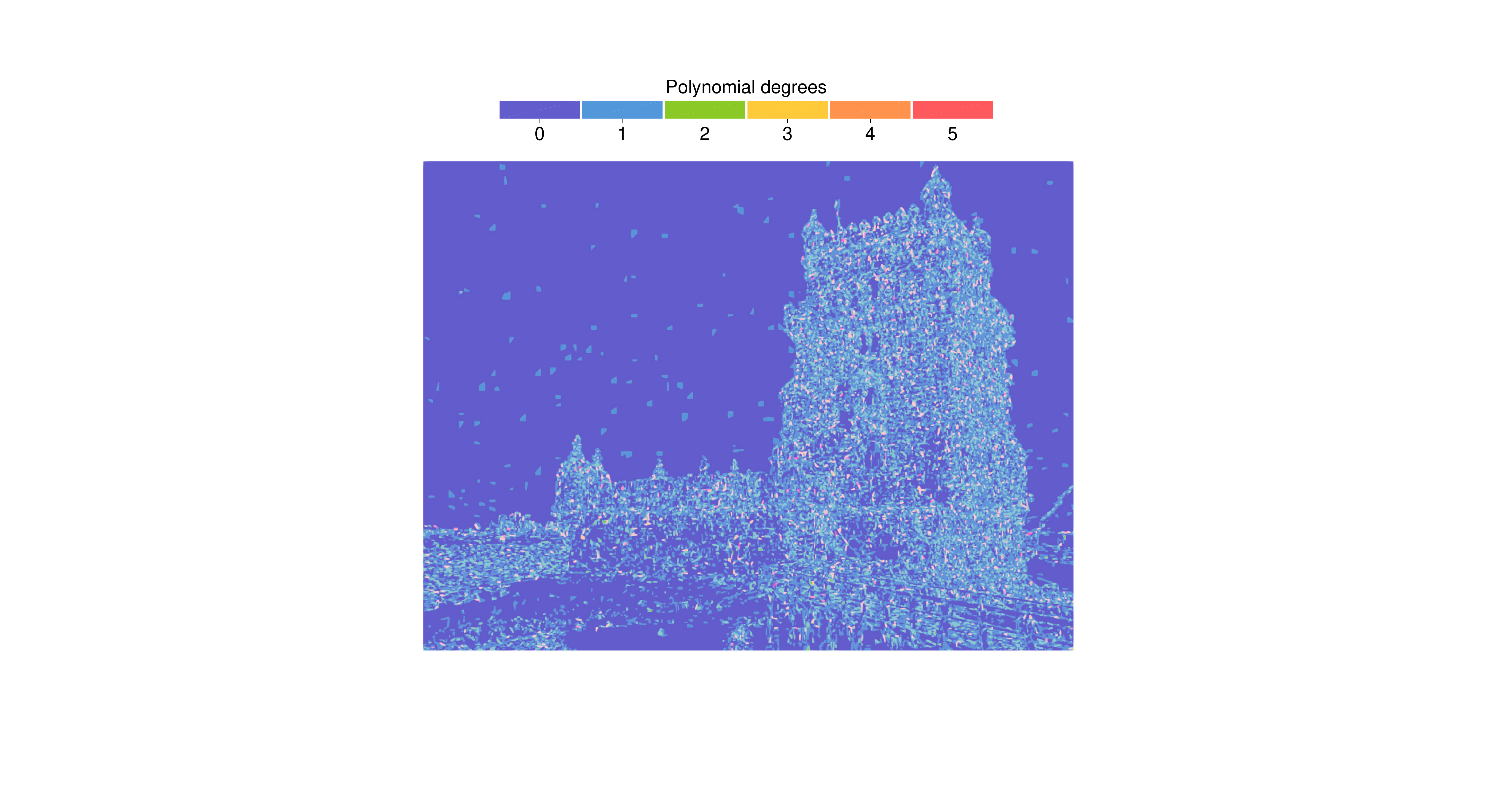}};
\draw(3.5,3.3) node{\includegraphics[scale = 0.16, clip,trim = 700 200 700 100]{%
  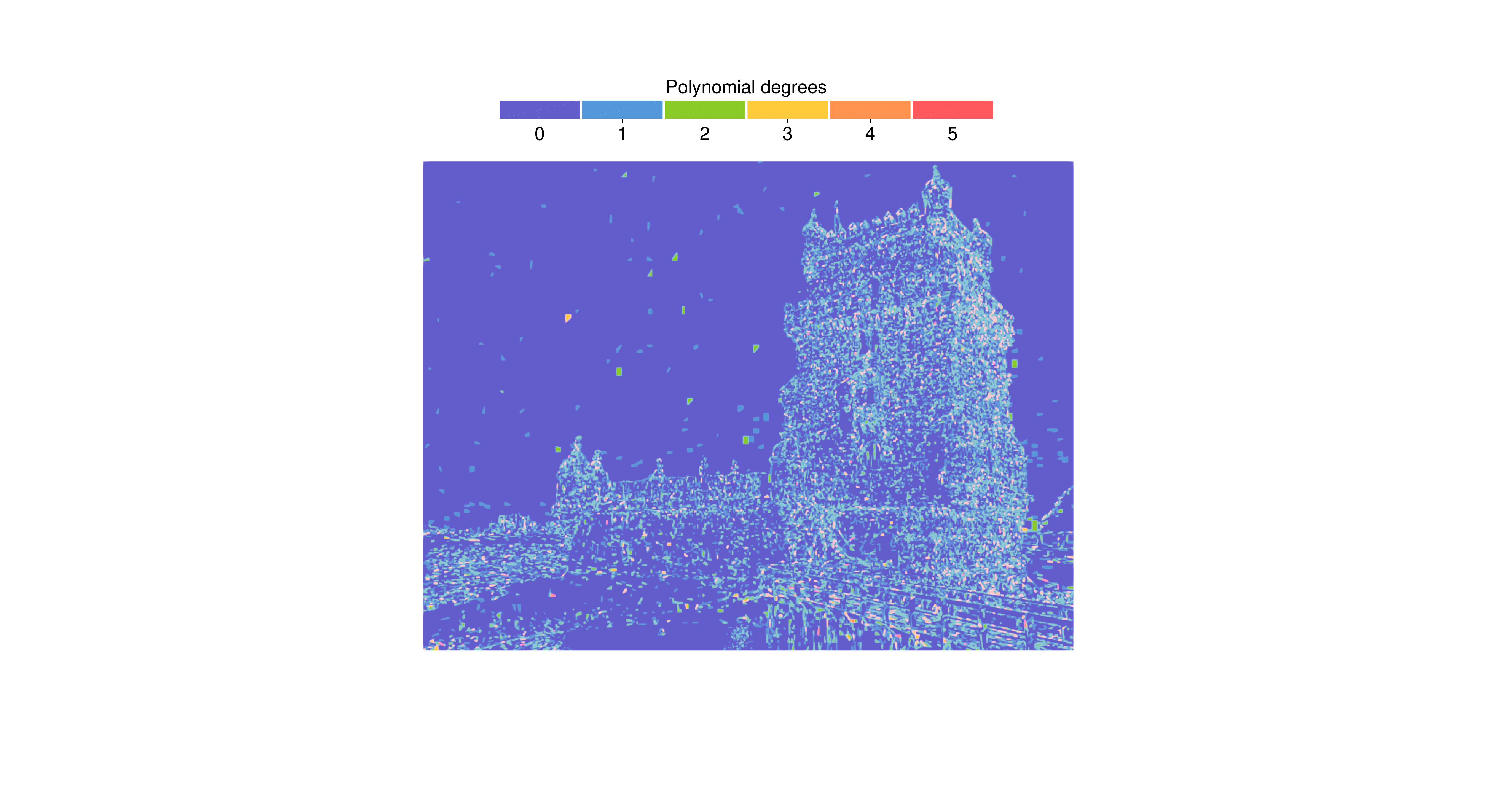}};
\draw(-3.5,-3.3) node{\includegraphics[scale = 0.16, clip,trim = 700 200 700 100]{%
  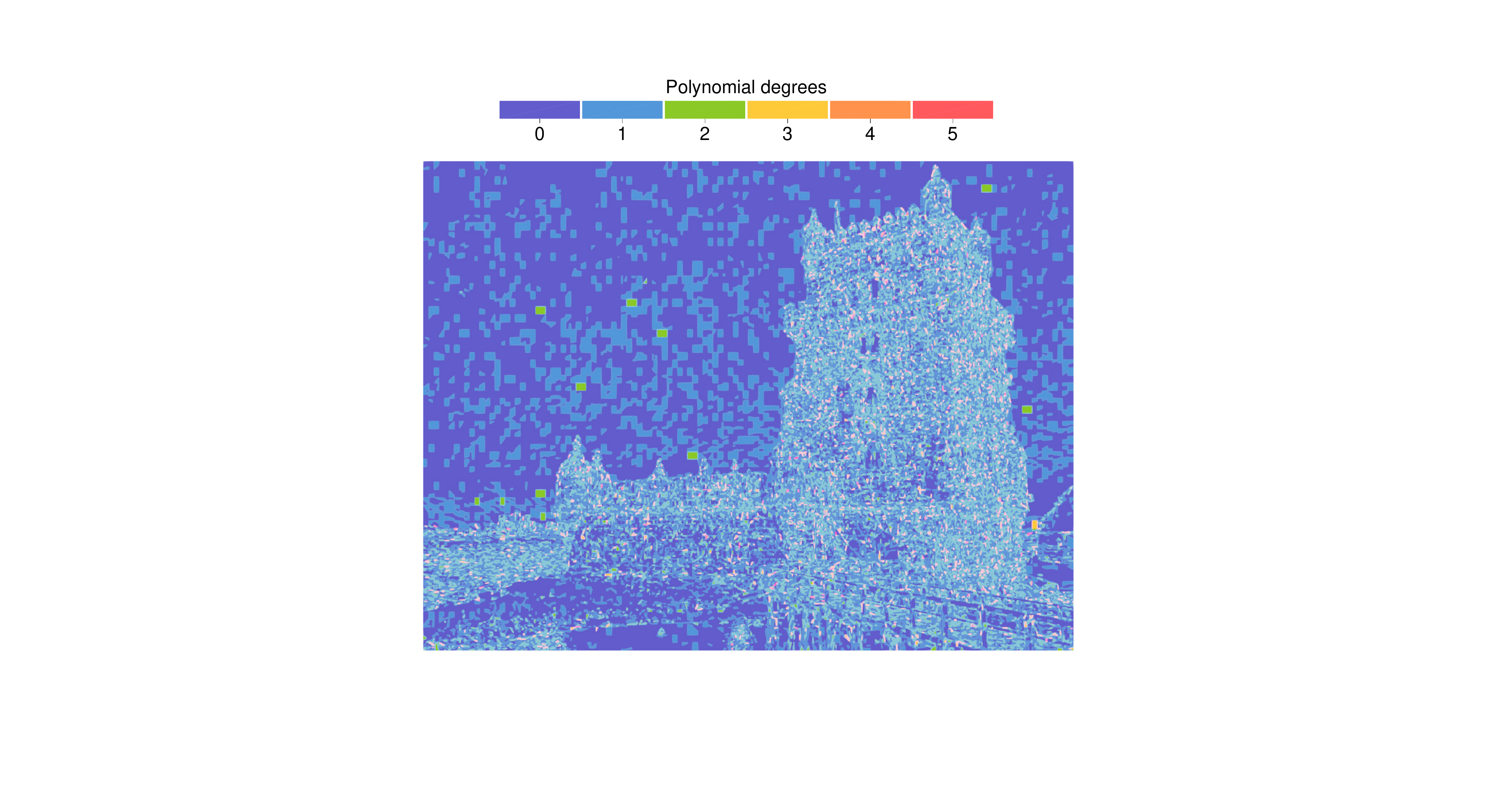}};
\draw(3.5,-3.3) node{\includegraphics[scale = 0.16, clip,trim = 700 200 700 100]{%
  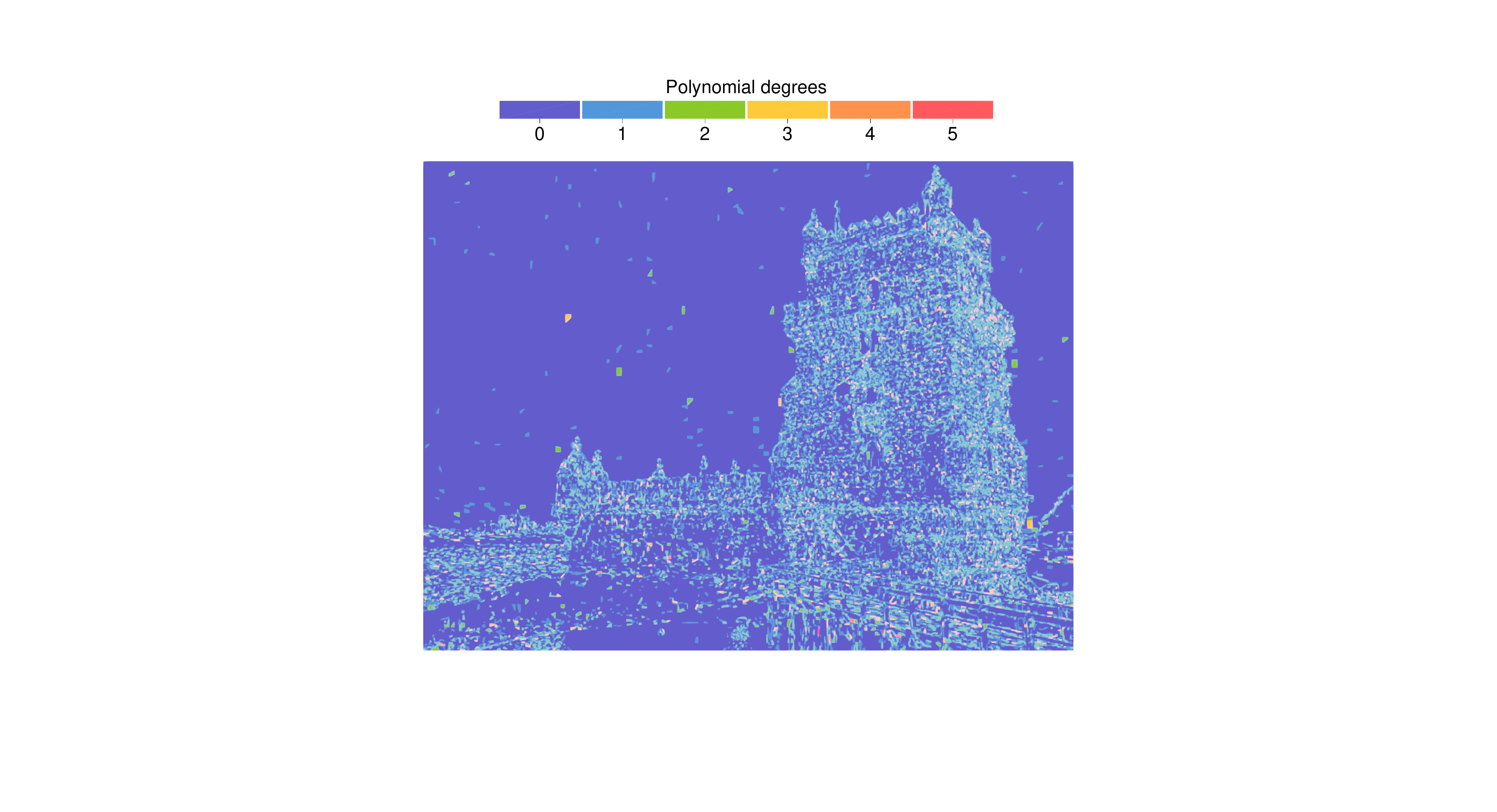}};
\end{tikzpicture}
\caption{Visualization of polynomial degree distributions across tree partition
  patches for the red channel of the Tower of Belém image, comparing the $hp$-K 
  (top-left), $hp$-ECP (top-right), $hp$-K+ECP (bottom-left) and $hp$-Binev 
  (bottom-right) approximations.
} \label{fig:poly_Belem}
\end{figure}

To validate the procedure described in Section~\ref{sec:gen_graph} for a
high-dimensional signal, we test the wedgelet forest on a synthetic
dataset. We sample $N = 3 \cdot 10^6$ uniformly distributed random
points in the domain $[1.5\pi, 4.5\pi] \times [0, 1]$ and embed them on
a manifold in $\mathbb{R}^3$. Specifically, we construct the classical
Swiss roll parametrization
\[
\begin{bmatrix}
    x(u,v) \\
    y(u,v) \\
    z(u,v)
\end{bmatrix}
=
\begin{bmatrix}
    u \cos(u) \\
    21 v \\
    u \sin(u)
\end{bmatrix}.
\]
The manifold is then lifted to $\mathbb{R}^{50}$ by taking the first
three components of each point and embedding them in the higher-
dimensional space. Subsequently, the data are transformed in
$\mathbb{R}^{50}$ by applying a random rotation, obtained via
multiplication with a random orthogonal matrix, and by adding uniform
noise of magnitude $10^{-3}$.
\begin{figure}[htb]
\centering
\begin{tikzpicture}
\draw(-3.5,0) node{\includegraphics[scale = 0.16, clip,trim = 700 150 700 0]{%
    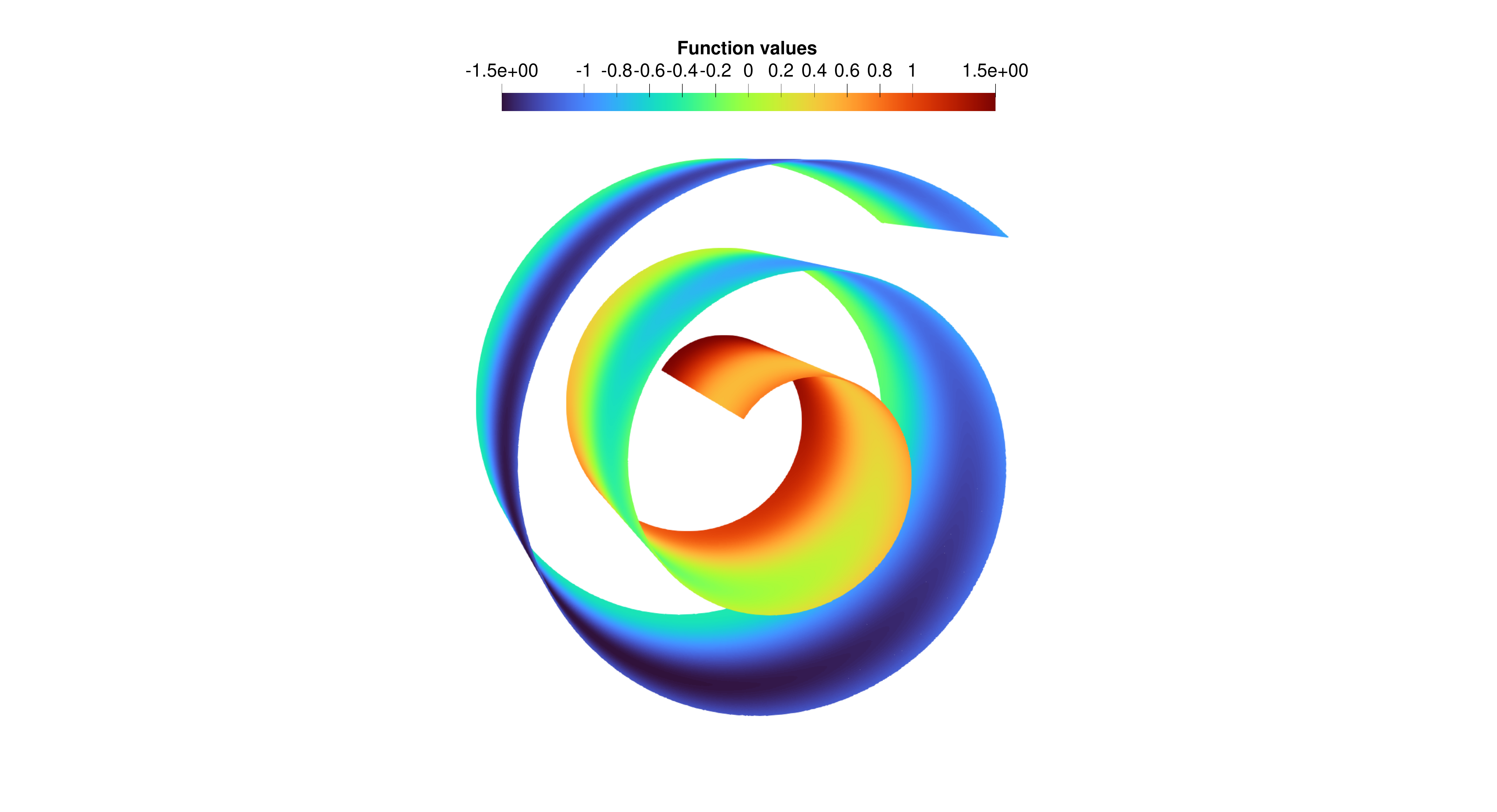}};
  \draw(3.5,0) node{\includegraphics[scale = 0.16, clip,trim = 700 150 700 0]{%
      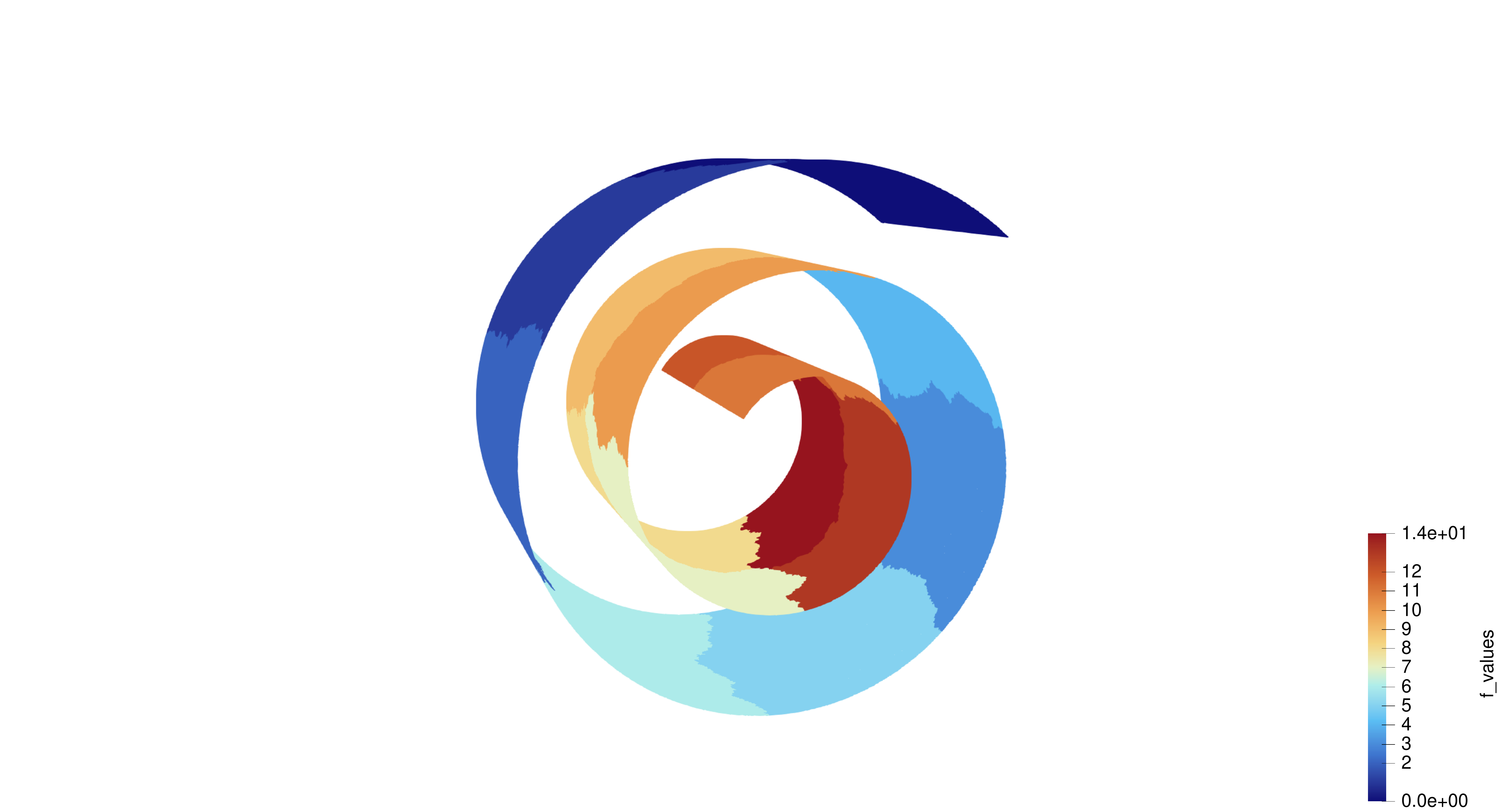}};
\end{tikzpicture}
\caption{Visualization of the dataset sampled on a 3D Swiss roll manifold,
  illustrating the signal and the partitioning used to test the wedgelet forest
  on high-dimensional data.
} \label{fig:SignalPart_SwissRoll}
\end{figure}

\begin{figure}[htb]
\centering
\begin{tikzpicture}
	\draw(-3.5,3.5) node{%
    \includegraphics[scale = 0.16, clip,trim = 700 150 700 50]{%
  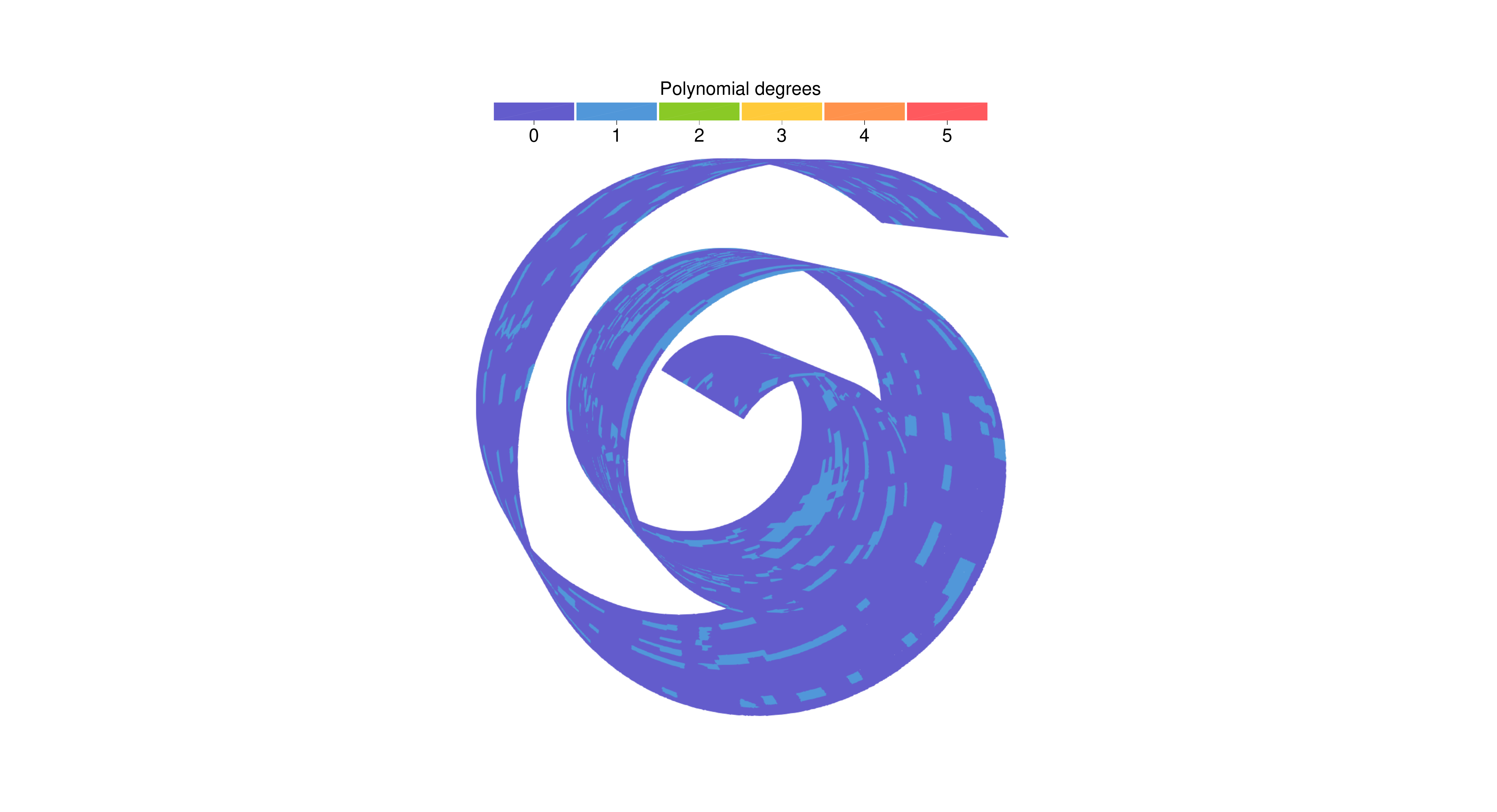}};
    \draw(3.5,3.5) node{%
      \includegraphics[scale = 0.16, clip,trim = 700 150 700 50]{%
    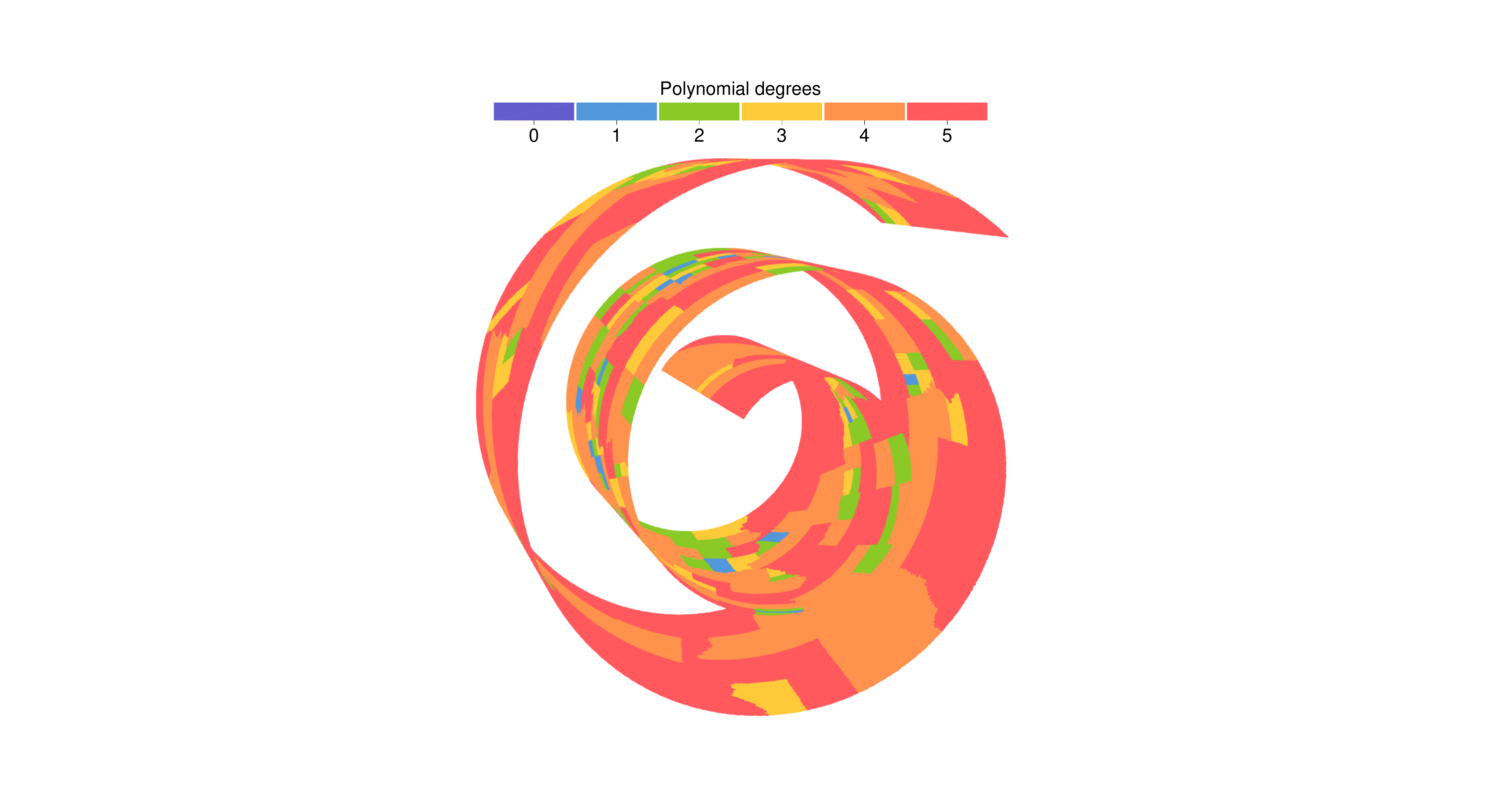}};
    \draw(-3.5,-3.5) node{%
      \includegraphics[scale = 0.16, clip,trim = 700 150 700 50]{%
    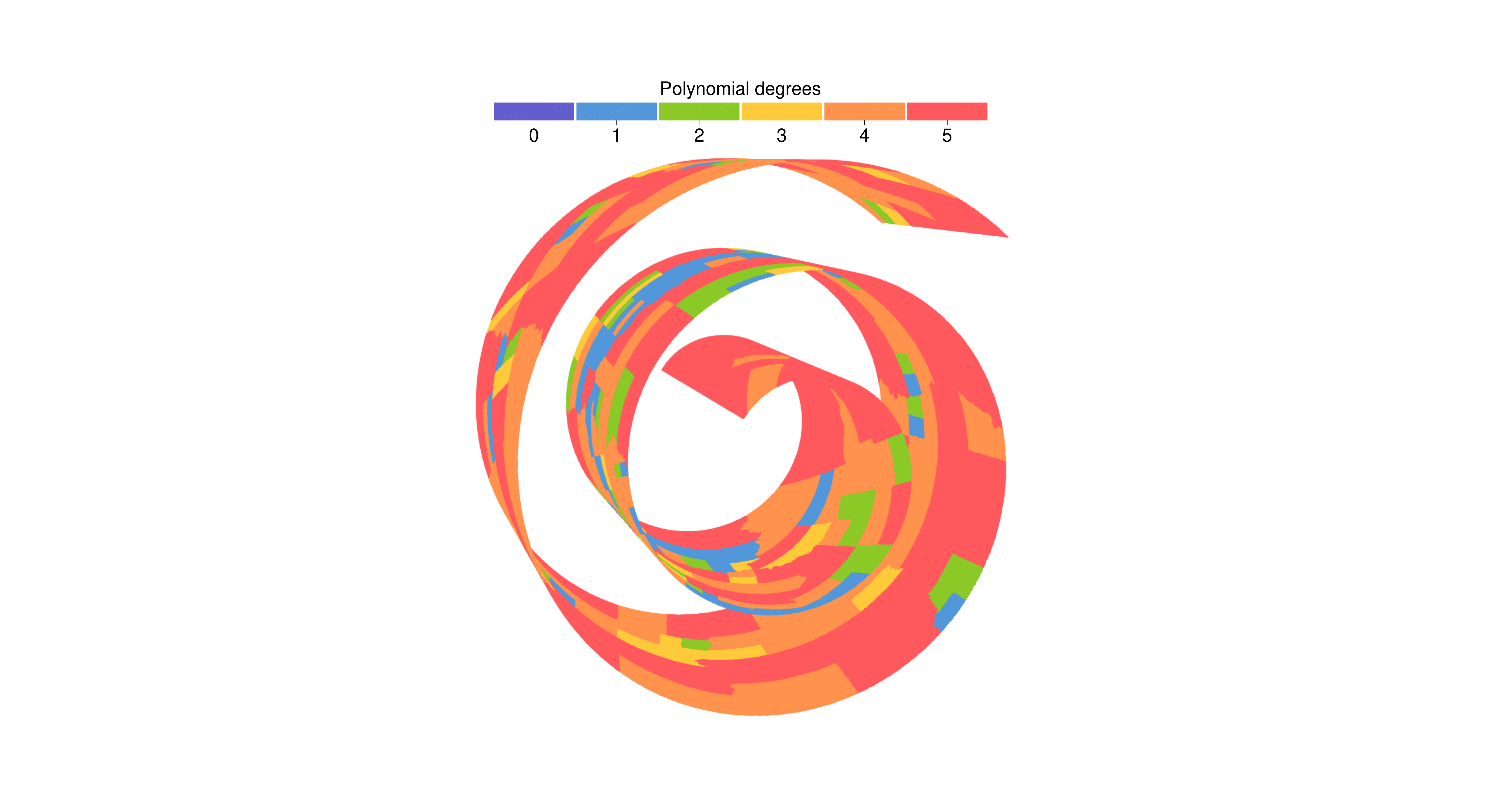}};
    \draw(3.5,-3.5) node{%
      \includegraphics[scale = 0.16, clip,trim = 700 150 700 50]{%
    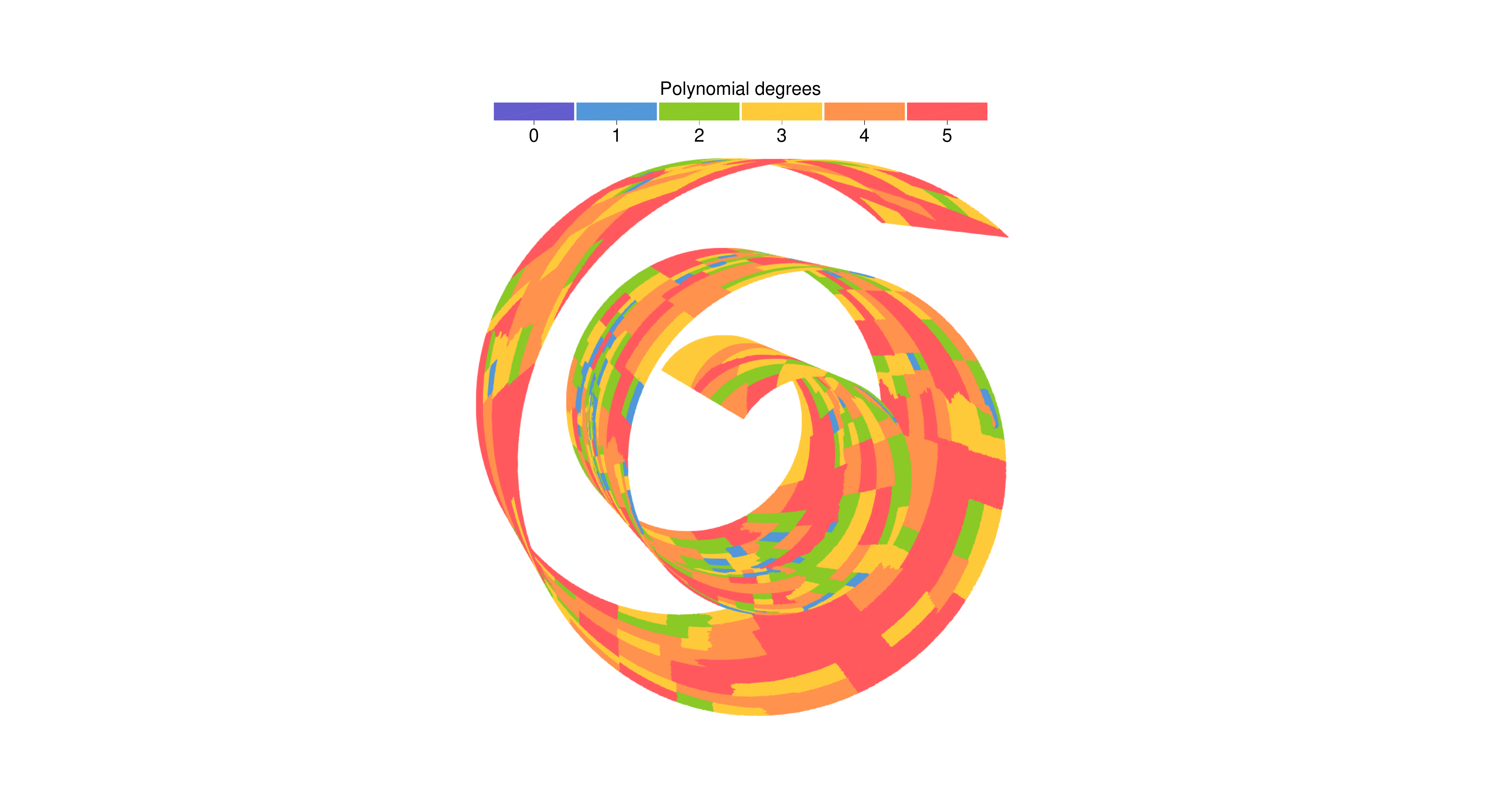}};
\end{tikzpicture}
\caption{Comparison of polynomial degrees used across local patches for the 
  lifted Swiss Roll dataset using different $hp$-adaptive refinement strategies: 
  $hp$-K (top-left), $hp$-ECP (top-right), $hp$-K+ECP (bottom-left) 
  and $hp$-Binev (bottom-right).
} \label{fig:poly_SwissRoll}
\end{figure}

On these points, we define the signal
\[
f(\mathbf{x}) = \sin\left(0.3 \sqrt{x_1^2 + x_3^2}\right)
+ 0.5 \cos(0.2\, x_2),
\]
on the Swiss roll. Following the proposed strategy, we construct the
$k$-nearest-neighbour graph with $k = 10$ and partition it into
$\rho = 15$ disjoint subgraphs $\{U_r\}_{r=1}^\rho$ using the
\texttt{METIS} algorithm. On each patch, we compute a local embedding in
$\mathbb{R}^2$ via \texttt{L-ISOMAP} with $1000$ landmarks. The signal
in $\mathbb{R}^3$ is shown in Figure~\ref{fig:SignalPart_SwissRoll} on
the left, while the resulting partition is shown on the right.

For each patch construction, we employ the same error as in the previous
global tests. The scaling is computed from the local signal,
\begin{align*}
    D_{m_r} &= \mathrm{diag}(s_{r,1}, \dots, s_{r,m}), \\
s_{r,j} &= \frac{1}{\max_{i \in U_r} |f_{i,j}|},
\end{align*}
and the patch error is normalized by the patch size $n_r = |U_r|$,
\[
\Ecal(\Tcal_r) = \frac{1}{n_r}
\left\| (\tilde{f} - f)_{|_{U_r}} D_{m_r} \right\|_F^2,
\]
where $\Tcal_r$ denotes the tree approximation on the patch $U_r$. Each
patch tree $\Tcal_r$ is refined until $\Ecal(\Tcal_r) \leq 10^{-4}$. Since 
the patches form a partition of the graph, the global error of the forest is
\begin{align*}
\Ecal(\{\Tcal_r\}_{r=1}^\rho) &= \frac{1}{N} \sum_{r=1}^\rho
\left\| (\tilde{f} - f)_{|_{U_r}} D_{m_r} \right\|_F^2 \\
&= \sum_{r=1}^\rho \frac{n_r}{N} \Ecal(\Tcal_r),    
\end{align*}
which is below the threshold $10^{-4}$ whenever each patch error meets
this tolerance.

Table~\ref{tab:complex_SwissRoll}, reports the cost of the
forest for the approximation of the signal and the total number of leaves. We obtain that in terms of storage, the greedy knapsack
criterion and ECP yield improvements of $5.5\%$ and
$57.7\%$, respectively. Their combined application results in a $58.9\%$
reduction, surpassing the $40.8\%$ gain achieved with Binev's
$hp$-near-optimal construction.

In
Figure~\ref{fig:poly_SwissRoll}, we depict the polynomial degrees used
in each partition. 
Moreover, due to
the smoothness of the signal, whenever pruning is possible, as in
Binev's strategy and the ECP algorithm, higher degree
polynomials are selected.
\begin{table}[htb]
\centering
\begin{tabular}{c|c|c|c|c|c|c}
 & $h$-max & $h$-Binev & $hp$-K & $hp$-ECP & $hp$-K+ECP & $hp$-Binev \\
\hline
 $\calC(\calT)$ &  $7\,138$ & $7\,147$  &  $7\,459$ & $5\,604$ & 
 $\bm{5\,208}$ & $7\,641$ \\
\hline
$\#\calL(\calT)$ & $7\,138$  & $7\,147$ & $6\,031$ & $\bm{431}$ & 
$663$ & $808$ \\
\hline
$\calC(\calT) + \#\calL(\calT)$ & $14\,276$  & $14\,294$ & $13\,490$ & $6\,035$ & 
$\bm{5\,871}$ & $8\,449$ 
\end{tabular}
\caption{Forest complexities and leaf cardinalities for the approximation of
the signal on the synthetic Swiss Roll dataset lifted to $\mathbb{R}^{50}$.}
\label{tab:complex_SwissRoll}
\end{table}

\section{Conclusion} \label{sec:conclude}

In this work, we have introduced an $hp$-adaptive framework for the approximation of graph signals, with a particular focus on the optimization of wedgelet-based representations. The proposed approach combines spatial $h$-refinements with novel local $p$-refinement strategies, providing a joint adaptive approximation concept exploiting simultaneously geometric and smoothness properties of the signal.

We developed two complementary refinement strategies, a greedy knapsack criterion, which selects $h$- and $p$-refinements according to their efficiency, and an error-cost pruning (ECP) algorithm, which replaces refined subtrees by higher-order polynomial approximations whenever this is more advantageous in terms of error and cost. Both strategies can be applied independently, but their combination yields in our experiments the most efficient signal representations. 

In order to have polynomial basis functions on general graphs at disposition, we applied a patch-wise local embedding strategy able to map local graph neighbourhoods into low-dimensional Euclidean spaces. This approach can be applied also for signals in high dimensional spaces and helps to reduce the computational cost associated with high-dimensional polynomial bases.

The obtained numerical experiments demonstrate that $hp$-adaptivity can substantially reduce the complexity and storage requirements of the resulting approximations compared with pure $h$-refinement strategies. In particular, combining the greedy knapsack strategy with error-cost pruning yields the most favourable option in order to reduce the number of polynomial coefficients and the size of the tree. For $hp$-adaptive wedgelet approximation, the reduction in the number of leaves leads to fewer landmarks to be stored, providing an additional benefit for signal compression.

Overall, the results indicate that the proposed $hp$-adaptive strategies provide an efficient framework for an optimized approximation and compression of signals on graphs, able to exploit signal-adaptive graph partitioning and sophisticated local polynomial approximation.

\section*{Acknowledgments}
The first and the third authors have been funded by the Swiss National Science
Foundation starting grant
``Multiresolution methods for unstructured data'' (TMSGI2\_211684). The second author acknowledges support by {GNCS-IN$\delta$AM}, as well as by the topical groups on \textquotedblleft Approximation Theory and Applications\textquotedblright\ of the Italian Mathematical Union and \textquotedblleft Math \& Brain \textquotedblright\ of the society SIMAI. 

\bibliographystyle{plain}
\bibliography{literature}

\end{document}